\newtheorem{theorem}{Theorem}
\newtheorem{lemma}{Lemma}
\newtheorem{observation}{Observation}
\theoremstyle{definition}
\newtheorem{definition}{Definition}
\newcommand{\ie}{{\it i.e., }} 
\newcommand{\etal}{{\it et al.}} 
\newcommand{\eg} {{\it e.g., }} 
\newcommand{\suchthat}{:}
\newcommand{\ex}[1]{\mathbf{E}\left[ #1 \right]}
\newcommand{\relmiddle}[1]{\mathrel{}\middle#1\mathrel{}}
\newcommand{\longmid}{\relmiddle{|}}
\newcommand{\ifthen}[2]{\textbf{if} #1 \textbf{then} #2 \textbf{endif}}
\newcommand{\foralldo}[2]{\textbf{for all} #1 \textbf{do} #2 \textbf{endfor}}
\newcommand{\vblank}{\vspace{0.2cm}}
\newcommand{\elect}{\mathbf{LE}}
\newcommand{\ranking}{\mathbf{RK}}
\newcommand{\degree}{\mathbf{DR}}
\newcommand{\neighbor}{\mathbf{NR}}
\newcommand{\prank}{P_{\mathrm{rank}}}
\newcommand{\pneighbor}{P_{\mathrm{neigh}}}
\newcommand{\call}{\mathcal{C}_{\mathrm{all}}}
\newcommand{\safe}{\mathcal{S}}
\newcommand{\stoken}{\mathcal{S}_{\mathrm{token}}}
\newcommand{\ssync}{\mathcal{S}_{\mathrm{sync}}}
\newcommand{\srank}{\mathcal{S}_{\mathrm{rank}}}
\newcommand{\rs}{\mathbf{\Gamma}} 
\newcommand{\snofake}{\mathcal{S}_{\mathrm{noFake}}}
\newcommand{\calc}{\mathcal{C}}
\newcommand{\id}{\mathtt{id}}
\newcommand{\idA}{\mathtt{id}_A}
\newcommand{\idT}{\mathtt{id}_T}
\newcommand{\colorA}{\mathtt{color}_A}
\newcommand{\colorT}{\mathtt{color}_T}
\newcommand{\degreeT}{\mathtt{degree}_T}
\newcommand{\dsum}{\mathtt{sum}}
\newcommand{\counted}{\mathtt{counted}}
\newcommand{\resetE}{\mathtt{reset}_E}
\newcommand{\timerP}{\mathtt{timer}_P}
\newcommand{\timerT}{\mathtt{timer}_T}
\newcommand{\neighbors}{\mathtt{neighbors}}
\newcommand{\tmax}{U_T}
\newcommand{\emax}{U_E}
\newcommand{\pmax}{U_P}
\newcommand{\swap}{\leftrightarrow}
\newcommand{\shiro}{W}
\newcommand{\aka}{R}
\newcommand{\ao}{B}
\newcommand{\fl}{\mathit{false}}
\newcommand{\graph}{\mathcal{G}}
\newcommand{\outputs}{\pi_{\mathrm{out}}}
\newcommand{\ndomain}{\mathbb{N}_{\ge 2}}
\newcommand{\mdomain}{\mathbb{N}_{\ge 1}}
\newcommand{\qtoken}{Q_{\mathrm{token}}}
\newcommand{\nlabel}{L_{\mathrm{neigh}}}
\newcommand{\chainX}{\mathbf{X}}
\newcommand{\chainY}{\mathbf{Y}}
\newcommand{\chainZ}{\mathbf{Z}}
\newcommand{\piX}{\pi_{\chainX}}
\newcommand{\piY}{\pi_{\chainY}}
\newcommand{\piZ}{\pi_{\chainZ}}
\newcommand{\states}{S}
\begin{document}
\title{
The Power of Global Knowledge on Self-stabilizing Population Protocols
\thanks{
This work was supported by JSPS KAKENHI Grant Numbers 17K19977, 18K18000, 18K18029, 18K18031, 19H04085, and 20H04140 and JST SICORP Grant Number JPMJSC1606.
}
}

\author[1]{Yuichi Sudo\thanks{Corresponding author:y-sudou[at]ist.osaka-u.ac.jp}}
\author[2]{Masahiro Shibata}
\author[3]{Junya Nakamura}
\author[4]{Yonghwan Kim}
\author[1]{Toshimitsu Masuzawa}

\affil[1]{Osaka University, Japan}
\affil[2]{Kyushu Institute of Technology, Japan}
\affil[3]{Toyohashi University of Technology, Japan}
\affil[4]{Nagoya Institute of Technology, Japan}

\date{}

\maketitle              % typeset the header of the contribution

\begin{abstract}
In the population protocol model,
many problems cannot be solved in a self-stabilizing way.
However, global knowledge, such as the number of nodes in a network,
sometimes allow us to design a self-stabilizing protocol for such problems.
In this paper, we investigate
the effect of global knowledge on
the possibility of self-stabilizing population protocols
in arbitrary graphs.
Specifically, we clarify the solvability
of the leader election problem, the ranking problem,
the degree recognition problem, and
the neighbor recognition problem
by self-stabilizing population protocols
with knowledge of the number of nodes
and/or the number of edges in a network.
\end{abstract}

\section{Introduction}
\label{sec:intro}
We consider the \emph{population protocol} (PP) model \cite{original} in
 this paper.
A network called \emph{population} consists of a large number of
 finite-state automata,
called \emph{agents}.
Agents make \emph{interactions}
 (\ie pairwise communication) with each other
by which they update their states.
The interactions are opportunistic,
that is, they are unpredictable for the agents.
%(or predictable only with probability).
Agents are strongly anonymous:
they do not have identifiers
and they cannot distinguish
their neighbors with the same states.
One example represented by this model is
a flock of birds where each bird is equipped with
a sensing device with a small transmission range.
Two devices can communicate (\ie interact) with each other
only when the corresponding birds
come sufficiently close to each other.
Therefore, %interactions are unpredictable for the agents.
an agent cannot predict when it has its next interaction.
%This unique but meaningful model has attracted
%a lot of attention and has been used in numerous studies.

In the field of population protocols, many efforts have been devoted to devising protocols for a complete graph, that is,
a population where every pair of agents interacts
infinitely often.
On the other hand,
several works
\cite{original,jikoantei,twooracles,onebit,CC19,CG17,MNRS14,samespeed,ninikanjiko,Sud+20B}
study %protocols that work for
the population represented by a general graph $G=(V,E)$
where $V$ is the set of agents and
$E$ specifies the set of interactable pairs.
Each pair of agents $(u,v)\in E$ has interactions
infinitely often,
while each pair of agents $(u',v') \notin E$ never has an interaction.
%We call this graph the \emph{interaction graph} of a population.

%As an example of a practical application for the setting,
%we can consider a network consisting of 
%huge number of tiny and cheap devices where
%each device tries to initiate interaction (i.e., wireless communication)
%with another device repeatedly at the same intervals.
%There may exist a pair of devices such that
%the time-gap between (periodic) initiating trials of the two devices is too small, that is, they try to initiate wireless communication almost simultaneously and they fail
%even when they come sufficiently close each other.
%Thus, such pairs of agents never have interactions
%while the other pairs of agents have interactions with the same rate.
%Thus, it is meaningful to consider a population represented not by a complete graph but by a general graph.

\emph{Self-stabilization} \cite{ganso}
is a fault-tolerant property that,
even when any transient fault
(\eg memory crash) hits a network,
it can autonomously recover from the fault.
%The notion of self-stabilization is described as follows:
Formally, self-stabilization is defined as follows:
(i) starting from an arbitrary configuration,
a network eventually reaches a \emph{safe configuration} (\emph{convergence}),
and
(ii) once a network reaches a safe configuration,
it keeps its specification forever (\emph{closure}).
%Self-stabilization has been studied intensively
%in a classical distributed computation model,
Self-stabilization is of great importance in the PP model
because self-stabilization tolerates any finite number of transient faults, and this is a necessary property
in a network consisting of a huge number of cheap and unreliable nodes. 
%(Such a network is the original motivation of the PP model.)
%(The PP model is originally motivated by
%such a network.)

Consequently,
many studies have been devoted to self-stabilizing population protocols \cite{jikoantei,twooracles,SIW12,onebit,CC19,oracle,kakai,samespeed,kanjiko,ninikanjiko,Sud+20B,Sud+20}. Angluin \etal~\cite{jikoantei} gave self-stabilizing protocols for a variety of problems: the leader election in the rings whose size are not multiples of a given integer $k$ (in particular, the rings of odd size), the token circulation in rings with a pre-selected leader, the 2-hop coloring in degree-bounded graphs, the consistent global orientation in undirected rings, and the spanning-tree construction in regular graphs. The protocols for the first four problems use only a constant space of agent memory,
while the protocol for the last problem requires $O(\log D)$ bits of agent memory, where $D$ is (a known upper bound\footnote{In \cite{jikoantei}, $D$ is defined as the diameter of the graph, not a known upper bound on it. However, since we must take into account an arbitrary initial configuration, we require an upper bound on the diameter; Otherwise, the agents need the memory of unbounded size. Fortunately, the knowledge of the upper bound is not a strong assumption in this case: any upper bound which is polynomial in the true diameter is acceptable since the space complexity is $O(\log D)$ bits.}
on) the diameter of the graph.
Chen and Chen \cite{CC19} gave a constant-space and self-stabilizing
protocol for the leader election
in rings with arbitrary size.

On the negative side,
Angluin \etal~\cite{jikoantei} proved that the self-stabilizing leader election (SS-LE) is impossible for arbitrary graphs. In particular, it immediately follows from their theorem that no protocol solves SS-LE in complete graphs with three different sizes, \ie in all of $K_i$, $K_j$, and $K_k$ for any distinct integers $i,j,k \ge 2$, where $K_l$ is a complete graph with size $l$. Cai \etal~\cite{SIW12} proved that no protocol solves SS-LE both in $K_i$ and in $K_{i+1}$ for any integer $i \ge 2$. In almost the same way, we can easily observe that
no protocol solves SS-LE both in $K_i$ and $K_j$ for any distinct integers $i,j \ge 2$. (See a more detailed explanation in the second page of \cite{Sud+20}.)
In other words, SS-LE is impossible unless the exact number of agents in the population is known to the agents.  Because Cai \etal~\cite{SIW12} also gave a protocol that solves SS-LE in $K_l$ for a given integer $l$, the knowledge of the exact number of agents is necessary and sufficient to solve SS-LE in a complete graph.

In addition to \cite{jikoantei,SIW12,CC19}, many works have been devoted to SS-LE. This is because
the leader election is one of the most fundamental and important problems in the PP model:
several important protocols \cite{original,fast,jikoantei} require a pre-selected unique leader,
especially, it is shown by Angluin \etal~\cite{fast} that if we have a unique leader,
all semi-linear predicates can be solved very quickly.
%(\ie in a poly-logarithmic parallel time, where \emph{parallel time} is the number of interactions divided by the number of agents in the population).
However, we have strong impossibility as mentioned above:
SS-LE can not be solved unless the knowledge of
the exact number of agents is given to the agents.
In the literature, there are three approaches
to overcome this impossibility.
One approach \cite{Bur+19,SIW12} is to assume that every agent knows the exact number of agents.
Cai \etal~\cite{SIW12} took this approach for the first time.
Their protocol uses $O(\log n)$ bits ($n$ states) of memory space per agent
and converges within $O(n^3)$ steps in expectation in the complete graph of $n$ agents
under \emph{the uniformly random scheduler},
which selects a pair of agents to interact uniformly at random from all pairs
at each step.
Burman \etal~\cite{Bur+19} gave three \emph{faster} SS-LE protocols
than the protocol of Cai \etal~\cite{SIW12}, also for the complete graph of $n$ agents.
%Their three protocols converge within $O(n^2 \log n)$ steps,
%$O(n^2)$ steps, and $O(n\log n)$ steps
%by using $O(\log n)$ bits ($O(n)$ states), $O(n \log n)$ bits ($n^{O(n)}$ states), 
%and infinitely large memory space per agent, respectively.
%It is worthwhile to mention that
These self-stabilizing protocols in \cite{Bur+19,SIW12} solve not only the leader election problem
but also \emph{the ranking problem},
which requires ranking the $n$ agents
by assigning them the different integers from $0,1,\dots,n-1$.
%Mizoguchi \etal~\cite{mizoguchisan} and Xu \etal~\cite{xusan}
%improved the space complexity of the protocol of \cite{SIW12} in variants of the PP model,
%%After Cai \etal~\cite{SIW12} gave a SS-LE protocol for comlete graphs using this knowldege,
%%Mizoguchi \etal~\cite{mizoguchisan} and
%%Xu \etal~\cite{xusan} gave SS-LE protocols using this knowledge
%%for variabnts of the PP model,
%\emph{the mediated population protocol model} \cite{mediated},
%where each pair of agents have a register,
%and \emph{the $\mathit{PP}_k$ model} \cite{ppk},
%where at most $k$ agents participate in an interaction,
%respectively.
%See Section \ref{sec:related} in Appendix
%for the results of the other two approaches
%to overcome the impossibility,
See Section \ref{sec:related}
for the results of the other two approaches
to overcome the impossibility,
SS-LE protocols with \emph{oracles} \cite{twooracles,onebit,oracle} and \emph{loosely-stabilizing protocols} \cite{kakai,samespeed,kanjiko,ninikanjiko,Sud+20B,Sud+20}.
%There are other two approaches to over come the impossibility
%of SS-LE in the PP model.
%One is \emph{loose-stabilization} \cite{kakai,samespeed,kanjiko,ninikanjiko,Sud+20B,Sud+20},
%which may deviate from the legitimate behavior
%even after a safe configuration is reached,
%but guarantee
%that the expected time before the deviation occurs
%is sufficiently long.
%The other one is to design SS-LE protocols with \emph{oracles} \cite{twooracles,onebit,oracle}.

\subsection{Our Contribution}
\label{sec:contribution}
As mentioned above, if we have
knowledge of the exact number of agents,
we can solve the self-stabilizing leader election
in complete graphs,
which we can never solve otherwise.
In this paper, we investigate in detail how powerful
global knowledge, such as the exact number of agents in the population,
is to design self-stabilizing population protocols
for \emph{arbitrary graphs}. 
Specifically, we consider two kinds of global knowledge,
the number of agents and the number of edges (\ie interactable pairs)
in the population, and clarify the relationships between the knowledge
and the solvability of the following four problems:
\begin{itemize}
 \item leader election ($\elect$): Elect exactly one leader,
 %\item naming ($\naming$): Assign the agent distinct integers,
 \item ranking ($\ranking$): Assign the agents in the population $G=(V_G,E_G)$ distinct integers (or \emph{ranks})
from $0$ to $|V_G|-1$,
  \item degree recognition ($\degree$): Let each agent recognize its degree in the graph,
 \item neighbor recognition ($\neighbor$): Let each agent recognize the set of its neighbors in the graph.
Since the population is anonymous, this problem
also requires having 2-hop coloring,
that is, all agents must be assigned integers (or \emph{colors})
such that all neighbors of any agent have different colors.
\end{itemize}
In addition to the above specifications,
we require that no agent change its outputs
(e.g., its \emph{rank} in $\ranking$)
after the population converges, that is,
it reaches a safe configuration.

We denote $A_1 \preceq A_2$ if
problem
%$A_2$ is not easier than $A_1$,
%that is,
$A_1$ is reducible to $A_2$.
%We have the following relationship:
We have %$\naming \preceq \ranking$,
$\elect \preceq \ranking$ and 
$\degree \preceq \neighbor$.
%\begin{itemize}
% \item $\naming \preceq \ranking$
% \item $\elect \preceq \ranking$
% \item $\degree \preceq \ranking$
%\end{itemize}
The first relationship holds
because if the agents are labeled $0,1,\dots,|V_G|-1$,
$\elect$ is immediately solved
by selecting the agent with label 0
as the unique leader.
The second relationship is trivial.

To describe our contributions, 
we formally define the global knowledge that we consider.
Define $\graph_{n,m}$ as the set of all the simple, undirected,
and connected graphs with $n$ nodes and $m$ edges.
Let $\nu$ and $\mu$ be any sets of positive integers
such that $\nu \subseteq \ndomain = \{n \in \mathbb{N} \mid n \ge 2\}$
and $\mu \subseteq \mdomain = \{m \in \mathbb{N} \mid m \ge 1\}$.
%Given $\nu$ and $\mu$,
Then, we define
$\graph_{\nu,\mu}=\bigcup_{n \in \nu, m \in \mu} \graph_{n,m}$.
For simplicity, we define $\graph_{\nu,*}=\graph_{\nu,\mdomain}$ and $\graph_{*,\mu}=\graph_{\ndomain,\mu}$
for any $\nu \subseteq \ndomain$ and $\mu \subseteq \mdomain$.
We consider that $\nu$ and $\mu$ are global knowledge on the population: $\nu$ is the set of the possible numbers of agents
and $\mu$ is the set of the possible numbers of interactable pairs. In other words, when we are given $\nu$ and $\mu$,
our protocol has to solve a problem only in the populations
represented by the graphs in $\graph_{\nu,\mu}$.
We say that protocol $P$ solves problem $A$
\emph{in arbitrary graphs}
given knowledge $\nu$ and $\mu$
if $P$ solves $A$ in all graphs in $\graph_{\nu,\mu}$.

In this paper,
we investigate the solvability
of %LE, RK, DR, and NR,
$\elect$, $\ranking$, $\degree$, and $\neighbor$
for arbitrary graphs
with the knowledge $\nu$ and $\mu$.
%Moreover, we give a sufficient condition on $\nu$
%to solve $\naming$.
Specifically, we prove the following propositions
assuming that the agents
are given knowledge $\nu$ and $\mu$:
\begin{enumerate}
% \item
%Given the knowledge of $\nu$ and $\mu$,
%there exists a self-stabilizing protocol
%that solves $\elect$ (and $\ranking$) in arbitrary graphs
%if and only if %$\nu$ consists of only one integer.
%the agents know the exact number of agents,
%\ie $|\nu|=1$, regardless of $\mu$,
\item
When the agents know nothing about the number of interactable pairs, \ie $\mu=\mdomain$,
there exists a self-stabilizing protocol
that solves $\elect$ and $\ranking$ in arbitrary graphs
if and only if
the agents know the exact number of agents
%\ie $\graph_{\nu,\mu} \subseteq \graph_{n,*}$ holds
%for some $n \in \ndomain$.     
\ie $\graph_{\nu,\mu}=\graph_{n,*}$
for some $n \in \ndomain$.     
%$\nu=\{n\}$ for some $n \in \ndomain$.
%\item
%%Given the knowledge of $\nu$ and $\mu$,
%There exists a self-stabilizing protocol
%that solves $\ranking$ (and $\elect$) in arbitrary graphs
%if  %$\nu$ consists of only one integer.
%the agents know the exact number of agents,
%regardless of the knowledge of interactable pairs,
%\ie if $\graph_{\nu,\mu} \subseteq \graph_{n,*}$ holds
%for some $n \in \ndomain$.
%\item
%%Given the knowledge of $\nu$ and $\mu$,
%There exists no self-stabilizing protocol
%that solves $\elect$ (or $\ranking$)
%if the agents do not know \emph{exactly} the number of agents
%and know nothing about the number of interactable pairs,
%\ie
%%if there are two distinct integers
%%$n_1,n_2 \in \ndomain$
%%such that
%%$\graph_{n_1,*} \cup \graph_{n_2,*} \subseteq \graph_{\nu,\mu}$.
%if $\graph_{n_1,*} \cup \graph_{n_2,*} \subseteq \graph_{\nu,\mu}$
%holds for some distinct $n_1,n_2 \in \ndomain$.
\item
%Given the knowledge of $\nu$ and $\mu$,
There exists a self-stabilizing protocol
that solves $\neighbor$ ($\succeq \degree$) in arbitrary graphs
if  %$\nu$ consists of only one integer.
the agents know the exact number of agents
and the exact number of interactable pairs
\ie %$|\nu|=1$, regardless of $\mu$,
%if there are integers $n \in \ndomain$
%and $m \in \mdomain$ such that
%$\graph_{\nu,\mu} = \graph_{n,m}$.
$\graph_{\nu,\mu} = \graph_{n,m}$ holds
for some $n \in \ndomain$ and $m \in \mdomain$.
%\item
%%Given the knowledge of $\nu$ and $\mu$,
%There exists no self-stabilizing protocol
%that solves $\degree$ (or $\neighbor$)
%if the agents do not know the exact number of interactable pairs
%and know nothing about the number of agents,
%\ie
%$\graph_{*,m_1} \cup \graph_{*,m_2} \subseteq \graph_{\nu,\mu}$
%holds for some distinct $m_1,m_2 \in \mdomain$,
\item
%No self-stabilizing protocol 
%that solves $\degree$ ($\preceq \neighbor$)
%exists 
%in some cases,
%where the agents do not know \emph{exactly} the number of interactable pairs, even if the agents know the exact number of agents.
The knowledge of the exact number of agents is not enough
to design a self-stabilizing protocol that solves $\degree$ ($\preceq \neighbor$)
in arbitrary graphs
if the agents do not know the number of interactable pairs \emph{exactly}.
Specifically,
no self-stabilizing protocol 
solves $\degree$ in all graphs in $\graph_{\nu,\mu}$
%if there is an integer $n \in \ndomain$ and there are two distinct integers $m_1,m_2 \in \mdomain$ such that
%$\graph_{n,m_1} \cup \graph_{n,m_2} \subseteq \graph_{\nu,\mu}$
%and neither $\graph_{n,m_1}$ nor $\graph_{n,m_2}$ is empty.
if $\graph_{n,m_1} \cup \graph_{n,m_2} \subseteq \graph_{\nu,\mu}$
%$\graph_{n,m_1} \neq \emptyset$,
%and $\graph_{n,m_2} \neq \emptyset$
holds for some $n \in \ndomain$
and some distinct $m_1,m_2 \in \mdomain$
%such that neither $\graph_{n,m_1}$ nor $\graph_{n,m_2}$ is empty.
such that $\graph_{n,m_1} \neq \emptyset$
and $\graph_{n,m_2} \neq \emptyset$.
\end{enumerate}

In standard distributed computing models, generally,
%each node always can access its local knowledge,
each node always has its local knowledge,
\eg its degree and the set of its neighbors.
In the PP model,
the agents does not have the local knowledge \emph{a priori},
and many impossibility results 
(\eg the impossibility of SS-LE
in complete graphs \cite{jikoantei,SIW12})
come from %this disability to access the local knowledge.
the lack of the local knowledge.
%It is worthwhile to mention that
Interestingly, %the necessary condition in
the third proposition yields that,
for self-stabilizing population protocols,
%it is difficult for the agents to obtain local knowledge
%their degrees or their neighbors
obtaining some local knowledge (degree recognition of each agent)
is at least as difficult as obtaining the corresponding
global knowledge (the number of interactable pairs).
It is also worthwhile to mention that
the PP model is empowered greatly
if $\elect$ and $\neighbor$ are solved.
After %$\elect$ and $\neighbor$ are solved,
the agents recognize their neighbors correctly,
the population can simulate one of the most standard
distributed computing models, the message passing model,
if each agent maintains a variable corresponding to a \emph{message buffer} for each neighbor.
Moreover, we have the unique leader in the population,
by which we can easily break the symmetry of a graph
and solve many important problems even in a self-stabilizing way.
For example, we can construct a spanning tree rooted by the leader.
%most of the standard distributed computing models,
%such as the message passing model and the state reading model,
%which we will discuss in Section \ref{sec:discussion}.
This fact and the above propositions show how powerful
this kind of global knowledge is
when we design self-stabilizing population protocols.
%in the PP model.
%Moreover, in the classical model, the unique leader
%Therefore, we can conclude that,
%in the PP model,
%the global knowledges,
%the exact number of agents
%and the exact number of interactable pairs,
%are powerfull enough to simulate
%classical distributed computing models.

%To prove the sufficient condition of the above two propositions,
%we give two self-stabilizing protocols,
%one for RK and the other one for NR.
%The former assumes the exact knowledge of the number of agents
%and uses $O(\log n)$ bits of agent memory
%while the latter assumes the exact knowledge of the number of
%agents uses $O(nm)$

%that 
%the PP model becomes as powerful as
%the classical distributed computing models
%if we are allowed to use them.

\subsection{Other Related Work}
\label{sec:related}
Several works use \emph{oracles},
a kind of failure detectors,
to solve SS-LE.
%The second approach is to use \emph{oracles},
%a kind of failure detectors.
Fischer and Jiang \cite{oracle}
took this approach for the first time.
They introduced oracle $\Omega ?$ that eventually
tells all agents whether at least one leader exists or not
and proposed two protocols that solve SS-LE for rings and complete
graphs by using $\Omega ?$.
Beauquier \etal~\cite{twooracles}
%use two copies of $\Omega ?$
%and solves SS-LE for arbitrary graphs.
presented an SS-LE protocol
for arbitrary graphs that uses two copies of $\Omega ?$,
one is used to detect the existence of a leader
and the other one is used to detect the existence
of a special agent called \emph{a token}.
Canepa \etal~\cite{onebit} proposed two SS-LE protocols that use $\Omega ?$ and require
only 1 bit of each agent:
one is a deterministic protocol for trees
and the other is a randomized protocol for arbitrary graphs
although the position of the leader is not static 
and moves among the agents forever.

To solve SS-LE without oracles or the knowledge of the exact number of agents, 
Sudo \etal~\cite{kanjiko} introduced
the concept of 
$\emph{loose-stabilization}$, 
which relaxes the closure requirement of 
self-stabilization, 
but keeps its advantage in practice. 
Specifically, loose-stabilization guarantees 
that, 
starting from any configuration, 
the population reaches a safe 
configuration 
within a relatively short time; 
after that, the specification of the problem 
(such as having a unique leader) 
must be sustained for a sufficiently long 
time, though not necessarily forever. 
In \cite{kanjiko}, 
a loosely-stabilizing leader election (LS-LE) 
protocol
was given for the first time, which assumes that
the population is a complete graph and
every agent knows a common 
\emph{upper bound} 
$N$ of $n$, where $n$ is the number of agents in the population. 
This protocol is practically equivalent 
to an SS-LE protocol since 
it maintains the unique leader for 
an exponentially large number of steps in expectation
(that is, practically forever) 
after reaching a safe configuration within 
$O(nN \log N)$ steps in expectation.
The assumption that we can use an upper bound $N$ of $n$ 
is practical because the protocol works 
correctly 
even if we make a large overestimation of $n$, 
such as $N = 10n$. 
Izumi \cite{kakai} gave a method 
which reduces the number of steps for convergence 
to $O(nN)$. 
Later, Sudo \etal~\cite{Sud+20}
gave a much faster loosely-stabilizing leader election protocol
for complete graphs.
Given parameter $\tau \ge 10$, 
it reaches a safe configuration within $O(\tau n \log^3 N)$ steps
and thereafter it keeps the unique leader
for $\Omega(cn^{10\tau})$ steps, both in expectation.
Very recently, Sudo \etal~\cite{Sud+20d}
gave a time optimal protocol for complete graphs:
the convergence time is $O(\tau n \log N)$ steps
and the holding steps is $\Omega(n^{\tau+1})$ steps.
%This convergence time is optimal. 
%He proves that any LS-LE protocol 
%requires $\Omega(nN)$ steps in expectation
%to reach a safe configuration
%if it keeps the unique leader for 
%$\Omega(e^N)$ steps in expectation thereafter.
%He also proves that his protocol is optimal 
%in terms of space complexity in bits, 
%that is, he shows that any LS-LE protocol 
%with $\Omega(e^N)$ parallel time 
%requires $\Omega(\log N)$-bit memory space per agent. 
In \cite{samespeed,ninikanjiko,Sud+20B}, 
LS-LE protocols 
were presented for arbitrary graphs.
%a population where 
%some pairs of agents may not have 
%interactions, \ie the interaction graph is not complete. 

\section{Preliminaries}
\label{sec:pre}
A \emph{population} is 
represented by a simple and connected graph $G=(V_G,E_G)$,
where $V_G$ is the set of the \emph{agents}
and $E_G \subseteq V_G \times V_G$ is the set of the \emph{interactable pairs} of agents.
%a network consisting of {\em agents}. 
%We denote the set of all the agents by $V$ 
%and let $n_G = |V_G|$. 
%We assume that a population is a complete 
%graph, 
If $(u,v) \in E_G$, %in the popualtion $G$,
two agents $u$ and $v$ can interact in the population $G$,
where $u$ serves as the \emph{initiator} 
and $v$ serves as the \emph{responder} of the 
interaction. 
%These roles, an initiator and a responder,
%plays an important role to specify a transition function
%that we will see later.
In this paper, we consider only \emph{undirected} populations,
that is, we assume that, for any population $G$, 
%We say that the population $G$ is \emph{undirected} if
$(u,v) \in E_G$ yields $(v,u) \in E_G$
for any $u,v \in V_G$.
%For each $v \in V_G$,
We define
the set of the neighbors of agent $v$
as $N_G(v)=\{u \in V_G \mid (v,u) \in E_G\}$.
%We also assume that the popuolation $G$ is \emph{undirected},
%that is, $(u,v) \in E_G$ yields $(v,u) \in E_G$
%for any $u,v \in V_G$.

%
%thus every pair of agents $(u,v)$ can 
%interact, 
%where $u$ serves as the \emph{initiator} 
%and $v$ serves as the \emph{responder} of the 
%interaction. 

A \emph{protocol} $P(Q,Y,T,\outputs)$ 
consists of 
a finite set $Q$ of states, 
a finite set $Y$ of output symbols, 
a transition function 
$T: Q \times Q \to Q \times Q$, 
and an output function $\outputs : Q \to Y$. 
When two agents interact, 
$T$ determines their next states 
according to their current states. 
The \emph{output} of an agent is determined 
by $\outputs$: 
the output of an agent in state $q$ is 
$\outputs(q)$. 
As mentioned in Section \ref{sec:intro},
we assume that the agents can use knowledge
$\nu$ and $\mu$.
Therefore, the four parameters of protocol $P$,
\ie $Q$, $Y$, $T$, and $\outputs$,
may depend on $\nu$ and $\mu$.
We sometimes write $P(\nu,\mu)$ explicitly
to denote protocol $P$ with knowledge $\nu$ and $\mu$.

A \emph{configuration} on population $G$
is a mapping
$C : V_G \to Q$ that specifies 
the states of all the agents in $G$. 
We denote the set of all configurations of 
protocol $P$ on population $G$
by $\call(P,G)$. 
We say that a configuration $C$ changes to 
$C'$ by an interaction 
$e = (u,v)$, 
denoted by $C \stackrel{P,e}{\to} C'$, 
if 
$(C'(u),C'(v))=T(C(u),C(v))$ 
and $C'(w) = C(w)$ 
for all $w \in V \setminus \{u,v\}$. 
We also denote $C \stackrel{P,G}{\to} C'$
if $C \stackrel{P,e}{\to} C'$ holds for some $e \in E_G$.
We also say that a configuration $C'$ is reachable from $C$
by $P$ on population $G$ if there is a sequence of configurations
$C_0,C_1,\dots,C_k$ such that $C_i \stackrel{P,G}{\to} C_{i+1}$
for $i=0,1,\dots,k-1$.
We say that a set $\safe$ of configurations
is \emph{closed}
if no configuration out of $\safe$
is reachable from a configuration in $\safe$.

An \emph{execution} of protocol $P$ on population $G$
is an infinite sequence of configurations $\Xi=C_0,C_1,\dots$
such that $C_i \stackrel{P,G}{\to} C_{i+1}$ for $i=0,1,\dots$.
We call $C_0$ the initial configuration of the execution $\Xi$.
We have to assume some kind of fairness of an execution.
Otherwise, for example, we cannot exclude an execution
such that only one pair of agents have interactions in a row
and no other pair has an interaction forever.
%and the other pairs never have interaction forever.
%Thoroughout this paper,
Unlike most distributed computing models in the literature, 
%fairness has been a global property in the PP model.
\emph{the global fairness} is usually assumed in the PP model.
%\footnotemark{}.
%\footnotetext{
%Since Angluin \etal~introduced the global fairness in
%their seminal work \cite{original}, which studied the PP model
%for the first time, almost all the work in the PP model
%has adopted the global fairness, including the uniformly random scheduler.
%%Note that an execution under the uniformly random scheduler,
%%which is also assumed in many works in the PP model,
%%satisfies the global fairness.
%}
We say that an execution $\Xi=C_0,C_1,\dots$ of $P$
on population $G$ satisfies
the global fairness (or $\Xi$ is globally fair)
%if the following condition holds:
if
for any configuration $C$ that appears infinitely often in $\Xi$,
every configuration $C'$ such that $C \stackrel{P,G}{\to} C'$
also appears infinitely often in $\Xi$.
%for any two configurations $C,C'$ that appears in $\Xi$
%such that $C \stackrel{P,G} C'$

A problem is specified by a predicate on the outputs of the agents.
We call this predicate the \emph{specification} of the problem.
%For example, the specification of
%the leader election problem ($\elect$)
%requires that exactly one leader outputs $L$
%and the other agents in the population outputs $F$.
We say that a configuration $C$ satisfies the specification
of a problem if the outputs of the agents satisfy it
in $C$.
%As mentioned in Section \ref{sec:intro},
We consider the following four problems in this paper.
\begin{definition}[$\elect$]
The specification of the leader election problem ($\elect$)
requires that exactly one agent outputs $L$
and all the other agents output $F$.
\end{definition}
%\begin{definition}[$\naming$]
%The specification of the naming problem ($\naming$)
%requires that the agents in the population
%output different integers.
%\end{definition}
\begin{definition}[$\ranking$]
The specification of the ranking problem ($\ranking$)
requires that in the population $G=(V_G,E_G)$,
the set of the outputs of the agents in the population
equals to $\{0,1,\dots,|V_G|-1\}$.
%where $n$ is the (true) number of agents in the population,
\end{definition}
\begin{definition}[$\degree$]
The specification of the degree recognition problem ($\degree$)
requires that in the population $G=(V_G,E_G)$, 
every agent $v \in V_G$ outputs $|N_G(v)|$.
\end{definition}
\begin{definition}[$\neighbor$]
The specification of the neighbor recognition problem ($\neighbor$)
requires that in the population $G=(V_G,E_G)$, 
every agent $v \in V_G$ outputs a two-tuple
$(c_v,S_v) \in \mathbb{Z} \times 2^{\mathbb{Z}}$
such that, for all $v \in V_G$,
we have %$u,w \in N_G(v) \Rightarrow c_u \neq c_w$
$S_v=\{c_u \mid u \in N_G(v)\}$
and $|S_v|=|N_G(v)|$.
\end{definition}
Note that the second condition in the definition of $\neighbor$,
\ie $|S_v|=|N_G(v)|$, requires that
the population is 2-hop colored,
that is, every two distinct neighbors $u$ and $w$ of agent $v$ must have different integers $c_u$ and $c_w$.

Now, we define self-stabilizing protocols in Definitions \ref{def:safe} and \ref{def:ss},
%in the following definition of self-stabilization,
where we use the definitions
given in Section \ref{sec:contribution}
for knowledge $\nu$ and $\mu$ %, their redundancy,
and the set $\graph_{\nu,\mu}$ of graphs.
Note that Definition \ref{def:safe} is not enough if we consider
\emph{dynamic problems} such as the token circulation,
where the specifications must be defined as predicates
not on configurations but on executions. 
However, we consider only static problems in this paper,
thus this definition is enough for our purpose.

\begin{definition}[Safe configuration]
\label{def:safe} 
Given a protocol $P$ and a population $G$,
we say that a configuration $C \in \call(P(\nu,\mu),G)$
is \emph{safe} for problem $A$
if (i) $C$ satisfies the specification of problem $A$,
and (ii) no agent changes its output in any execution of $P$ on $G$
starting from $C$.
\end{definition}

\begin{definition}[Self-stabilizing protocol]
\label{def:ss}
%Let $\nu$ and $\mu$ are redundant.
For any $\nu$ and $\mu$,
we say that a protocol $P$ is a self-stabilizing protocol
that solves problem $A$ in arbitrary graphs
given knowledge $\nu$ and $\mu$
if
%for any population $G \in \graph_{\nu,\mu}$
%and any configuration
%$C_0 \in \call(P(\nu,\mu),G)$,
every globally-fair execution
%$\Xi=C_0,C_1,\dots$
of $P(\nu,\mu)$
on any population $G$,
which starts from any configuration $C_0 \in \call(P(\nu,\mu),G)$,
reaches a safe configuration for $A$.
\end{definition}

Finally, we define \emph{the uniformly random scheduler},
which has been considered in most of the works 
\cite{AG15,original,fast,GSU18,samespeed,kanjiko,SOI+19+,ninikanjiko,Sud+20B,Sud+20}
in the PP model.
%to evaluate \emph{time complexities} of protocols. 
Under this scheduler,
exactly one ordered pair $(u,v) \in E_G$ is chosen to interact
uniformly at random from all interactable pairs.
We need this scheduler to evaluate
\emph{time complexities} of protocols
because
global fairness only guarantees that
an execution makes progress \emph{eventually}.
%hence we cannot evaluate the time of globally fair exeuction.
Formally, the uniformly random scheduler is defined as
a sequence of interactions $\rs=\Gamma_0, \Gamma_1,\dots$,
where each $\Gamma_t$ is a random variable such that 
$\Pr(\Gamma_t = (u,v)) =1/|E_G|$ 
for any $t \ge 0$ and any $(u,v) \in E_G$. 
Given a population $G$, a protocol $P(\nu,\mu)$, and 
an initial configuration $C_0 \in \call(P(\nu,\mu),G)$,
the execution under the uniformly random scheduler
is defined as 
$\Xi_{P(\nu,\mu)}(G,C_0,\rs) = C_0,C_1,\dots$ such 
that 
$C_t \stackrel{P(\nu,\mu),\Gamma_t}{\to} C_{t+1}$ for 
all $t \ge 0$. 
When we assume this scheduler,
we can evaluate time complexities of a population protocol,
for example, the expected number of steps
required to reach a safe configuration.
%By Definition \ref{def:ss},
%Since the execution under the uniformly random scheduler
%is globally fair with probability $1$,
We have the following observation.
\begin{observation}
\label{obs:ss}
A protocol $P(\nu,\mu)$ is self-stabilizing
for a problem $A$
if and only if
%there exists a set $\safe$
%of configurations in $\call(P(\nu,\mu),G)$
%such that the second and the third conditions
%of Definition \ref{def:ss} hold
$\Xi_{P(\nu,\mu)}(G,C_0,\rs)$ reaches
a safe configuration for $A$ with probability $1$
for any configuration $C_0 \in \call(P(\nu,\mu),G)$.
\end{observation}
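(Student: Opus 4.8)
\section*{Proof proposal for Observation~\ref{obs:ss}}

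The plan is to prove both implications by mediating between the combinatorial notion of global fairness and the probabilistic behaviour of the finite Markov chain that $\rs$ induces on $\call(P(\nu,\mu),G)$. The single structural fact that drives everything is that the set $\safe$ of safe configurations for $A$ is \emph{closed} in the sense of Section~\ref{sec:pre}. I would record this first: if $C\in\safe$ and $C'$ is reachable from $C$, then prepending a finite transition sequence $C\to\dots\to C'$ to any execution starting at $C'$ yields an execution starting at $C$; by clause (ii) of Definition~\ref{def:safe} no output changes along it, so the outputs in $C'$ equal those in $C$, whence $C'$ satisfies clause (i), and clause (ii) for $C'$ follows because every execution out of $C'$ is such a suffix. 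Thus $C'\in\safe$. This lemma says that ``reaching $\safe$'' is a permanent event and that no configuration of $\safe$ is reachable from a non-safe closed set.

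For the forward direction I would establish the standard auxiliary claim that, for every $C_0$, the execution $\Xi_{P(\nu,\mu)}(G,C_0,\rs)$ is globally fair with probability $1$. Fix a pair $(C,C')$ with $C \stackrel{P(\nu,\mu),G}{\to} C'$. Whenever the chain sits at $C$, the probability that the next interaction realises $C\to C'$ is at least $1/|E_G|$, uniformly in time; hence on the event that $C$ occurs infinitely often, a conditional (second) Borel--Cantelli argument shows that $C'$ occurs infinitely often almost surely. A union bound over the finitely many such pairs $(C,C')$ then shows that the global-fairness condition of Section~\ref{sec:pre} holds almost surely. Given this, self-stabilization immediately implies the right-hand side: almost every trajectory of $\rs$ is globally fair, and every globally-fair execution reaches $\safe$ by Definition~\ref{def:ss}, so $\Xi_{P(\nu,\mu)}(G,C_0,\rs)$ reaches $\safe$ with probability $1$.

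For the converse I would argue the contrapositive. Suppose $P(\nu,\mu)$ is not self-stabilizing, so some globally-fair execution $\Xi=C_0,C_1,\dots$ never enters $\safe$. Since $\call(P(\nu,\mu),G)$ is finite, the set $\mathcal{I}$ of configurations occurring infinitely often in $\Xi$ is non-empty; global fairness makes $\mathcal{I}$ closed under $\stackrel{P(\nu,\mu),G}{\to}$, and the fact that $\Xi$ is a single sequence in which every member of $\mathcal{I}$ recurs makes $\mathcal{I}$ strongly connected, i.e.\ a bottom strongly connected component of the graph on $\call(P(\nu,\mu),G)$ whose edges are the one-step transitions. As $\Xi$ avoids $\safe$ we have $\mathcal{I}\cap\safe=\emptyset$, and by closedness of $\safe$ no configuration of $\safe$ is reachable from $\mathcal{I}$; moreover any path $C_0\to\dots\to C^{*}$ with $C^{*}\in\mathcal{I}$ must itself avoid $\safe$, for otherwise $\mathcal{I}$ would be reachable from a safe configuration and hence be safe. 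Following such a length-$k$ path under $\rs$ has positive probability $\bigl(\ge (1/|E_G|)^{k}\bigr)$, after which the chain is trapped in the closed non-safe set $\mathcal{I}$; thus with positive probability $\Xi_{P(\nu,\mu)}(G,C_0,\rs)$ never reaches $\safe$, contradicting the right-hand side. Quantifying over every $C_0$ and every $G\in\graph_{\nu,\mu}$ completes the equivalence.

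I expect the only genuinely delicate step to be the almost-sure global-fairness claim: the events ``the step out of the $t$-th visit to $C$ lands at $C'$'' are not independent, so the plain second Borel--Cantelli lemma does not apply and one must use its conditional form together with the uniform lower bound $1/|E_G|$ on one-step probabilities. Everything else---closedness of $\safe$, the bottom-strongly-connected structure of $\mathcal{I}$, and the positive-probability trapping argument---is routine finite-Markov-chain bookkeeping.
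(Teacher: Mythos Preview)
Your argument is correct, but it is organised differently from the paper's. The paper factors both directions through a single intermediate characterisation: it defines a \emph{final set} to be a closed, strongly connected set of configurations, asserts (as ``trivial'') that $P$ is self-stabilizing iff every final configuration is safe, and then shows that the probability-$1$ condition is equivalent to this same statement about final configurations. In particular, for the converse it simply takes $C_0$ to be any configuration in a non-safe final set and observes that the chain is trapped there.

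Your route avoids that intermediate characterisation. For the forward direction you prove directly that $\rs$ is almost surely globally fair and then invoke Definition~\ref{def:ss}; this is arguably cleaner because it does not rely on the unproved ``trivial'' claim. For the converse you extract the bottom strongly connected component $\mathcal{I}$ from a witness non-safe globally-fair execution and argue a positive-probability trapping; this is correct but slightly more elaborate than the paper's version, which skips the path from $C_0$ into $\mathcal{I}$ by taking $C_0\in\mathcal{I}$ from the start. Your explicit proof that $\safe$ is closed is a useful addition that the paper needs but only uses implicitly inside the ``trivial'' step.
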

\begin{proof}
Remember that we do not allow
a protocol to have an infinite number of states.
According to \cite{original},
we say that a set $\calc$ of configurations is \emph{final}
if $\calc$ is closed,
and all configurations in $\calc$ are reachable from 
each other.
We also say that
a configuration $C$ is \emph{final}
if it belongs to a final set.
%We can easily observe that 
It is trivial that
protocol $P$ is self-stabilizing
if and only if all final configurations are safe. 
Thus, 
it suffices to show that
execution $\Xi=\Xi_{P(\nu,\mu)}(G,C_0,\rs)$ reaches
a safe configuration for $A$ with probability $1$
for any $C_0 \in \call(P(\nu,\mu),G)$
if and only if
all final configurations of $P(\nu,\mu)$ are safe for $A$.
%Condier that $P$ is self-stabilizing for problem $A$.
%Then, 
The sufficient condition is trivial
because $\Xi$ reaches a final configuration with probability 1
regardless of $C_0$. We prove the necessary condition below.
Suppose that there is a final configuration $C$ that is not safe. 
By definition, $C$ belongs to
a final set $\calc$.
Since $C$ is reachable from all configurations in $\calc$,
no configuration in $\calc$ is safe.
Since $\calc$ is closed, $\Xi$ will never reach a safe configuration if $C_0=C$.
\end{proof}

%In the rest of this paper,
%In this paper, we often write ``with very high probability'',
%abbreviated by \wvhp, to denote probability $1-O(n^{-c})$
%for any large constant $c$, where $n$ is the number of agents
%in a population.

%Due to the lack of space, we omit the proofs for a few
%propositions regarding the random walk in the PP model,
%which are used in the proofs of the lemmas
%in Section \ref{sec:ranking} and \ref{sec:neighbor}. 
%See the preprint \cite{Sud+20C} for the complete proofs.

\section{Random Walk in Population Protocols}
\label{sec:randomwalk}
In this paper, we gave two self-stabilizing protocols 
$\prank$ and $\pneighbor$.
Both of them use $n$ tokens that make the random walk,
where $n$ is the number of agents in the population.
Specifically, all the agents in the population
always has exactly one token,
and two agents swap their tokens
whenever they have an interaction.
In this section,
we give several lemmas about the movements of the tokens
(\ie Lemmas \ref{lem:hittingtime}, \ref{lem:covertime},
\ref{lem:meet_all_tokens}, \ref{lem:meet_each_other},
\ref{lem:some_drift}, and \ref{lem:drift})
to analyze the expected number of steps until
an execution of $\prank$ or $\pneighbor$
reaches a safe configuration.

Although Lemmas \ref{lem:hittingtime} and
\ref{lem:meet_each_other} were already proven
by Sudo \etal~\cite{Sud+20B},
we also give proofs for them
with the notations of this paper,
to make this paper self-contained.

Fix a population $G=(V_G, E_G)$
and consider
the execution $\Xi=C_0,C_1,\dots$
of $\prank$ or $\pneighbor$\footnotemark{} 
under the uniformly random scheduler
starting from an arbitrary configuration $C_0$.
\footnotetext{
In this section,
we do not care which protocol, $\prank$ or $\pneighbor$,
we execute
because we focus on only the movement of the tokens
making the random walk.
}
Let $\rs=\Gamma_0, \Gamma_1,\dots=(u_0,v_0),(u_1,v_1),\dots$,
that is, we denote the $i$-th interaction
under the uniformly random scheduler $\rs$ by $(u_i,v_i)$.
Formally, for each $w \in V_G$
we define token $t_w: \mathbb{N}_{\ge 0} \to V_G$ as follows:
\begin{itemize}
 \item $t_w(0)=w$,
 \item 
$t_w(i)=
\begin{cases}
u_t & t_w(i-1) = v_{i-1}\\
v_t & t_w(i-1) = u_{i-1}\\
t_w(i-1) & \text{Otherwise}
\end{cases}$\\
for each $i > 0$.
\end{itemize}
We say that token $t_v$ visits $u$ in the $i$-th step
if $t_v(i) = u$.
We also say that two tokens $t_u$ and $t_w$ meet in the $i$-th step if $\Gamma_i = (t_u(i),t_v(i))$ or $\Gamma_i = (t_v(i),t_u(i))$
holds.
In the rest of this section,
%we fix the population $G=(V_G, E_G)$,
%let $n=|V_G|$ and $m=|E_G|/2$,
%let $d$ be the diameter of $G$.we
we denote the number of agents
and the number of interactable pairs
by $n$ and $m$, that is,
$n = |V_G|$ and $m=|E_G|/2$.
The diameter of population $G$
is denoted by $d$.

\begin{lemma}[\cite{Sud+20B}]
\label{lem:hittingtime}
%Consider the execution $\Xi=C_0,C_1,\dots$
%of $\prank$ or $\pneighbor$ on population $G=(V_G,E_G)$
%under the uniformly random scheduler, which starts from
%arbitrary configuration $C_0$. 
In execution $\Xi$,
for any $u,v \in V_G$, 
token $t_u$ visits agent $v$
within $mn\cdot d(u,v)$ steps in expectation,
where $d(u,v)$ is the distance between agent $u$ and $v$ in $G$.
\end{lemma}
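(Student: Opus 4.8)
The plan is to track the position of token $t_u$ as a Markov chain on $V_G$ and bound its expected hitting time to $v$ by an electrical-network argument. First I would observe that the scheduler selects each of the $|E_G| = 2m$ ordered pairs with probability $1/(2m)$, so a token residing at a node $w$ moves to a particular neighbor $x \in N_G(w)$ exactly when $\Gamma_t \in \{(w,x),(x,w)\}$, i.e. with probability $1/m$; with the remaining probability $1 - \deg(w)/m$ the token stays put. Hence the position of $t_u$ (as a function of the interaction step) evolves as a lazy random walk with transition matrix $P(w,x) = 1/m$ for $x \in N_G(w)$ and $P(w,w) = 1 - \deg(w)/m$. A one-line check shows this chain is reversible with respect to the uniform distribution $\pi \equiv 1/n$, since $\pi(w)P(w,x) = 1/(nm) = \pi(x)P(x,w)$ for every edge. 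Note that time is already measured in interaction steps here, so the laziness is built directly into $P$ and no rescaling between ``moves'' and ``steps'' is needed.

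Next I would invoke the commute-time identity for reversible chains. Viewing the chain as an electrical network with edge conductances $c(w,x) = \pi(w)P(w,x) = 1/(nm)$ on each genuine edge, the weighted degree of every node is $c_w = \sum_x \pi(w)P(w,x) = \pi(w) = 1/n$ (the self-loops coming from laziness contribute to $c_w$ but are irrelevant to resistances between distinct nodes), so the total conductance is $c_G = \sum_w c_w = 1$, while each genuine edge carries resistance $1/c(w,x) = nm$. The commute-time formula then gives, for any pair of nodes, $H(a,b) + H(b,a) = c_G \cdot R_{\mathrm{eff}}(a,b) = R_{\mathrm{eff}}(a,b)$, where $H(\cdot,\cdot)$ is the expected hitting time in interaction steps. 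For an edge $(a,b) \in E_G$ the effective resistance is at most the resistance of the direct edge, so $R_{\mathrm{eff}}(a,b) \le nm$; since a hitting time never exceeds the corresponding commute time, this yields $H(a,b) \le nm$ for adjacent $a,b$.

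Finally I would pass from single edges to arbitrary pairs via subadditivity. Fixing a shortest path $u = w_0, w_1, \dots, w_{d(u,v)} = v$ and applying the triangle inequality $H(w_0,w_k) \le H(w_0,w_{k-1}) + H(w_{k-1},w_k)$ inductively (a consequence of the strong Markov property: to reach $w_k$ one may first reach $w_{k-1}$ and then continue), I obtain $H(u,v) \le \sum_{i=0}^{d(u,v)-1} H(w_i,w_{i+1}) \le d(u,v)\cdot nm$, which is exactly the claimed bound.

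The step I expect to require the most care is pinning down the constant in the commute-time computation, because the laziness (the self-loops at each node) perturbs the weighted degrees and it would be easy to mis-normalize. The feature that rescues the clean bound is that with conductances $1/(nm)$ the weighted degree of each node is exactly $\pi(w)=1/n$, so $c_G$ normalizes to $1$ and the commute time equals the effective resistance outright, making every adjacent pair contribute at most $nm$. I would be careful to state the reversibility and the commute-time identity precisely, since these are the load-bearing facts; the summation over the shortest path and the final arithmetic are then routine.
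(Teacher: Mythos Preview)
Your argument is correct and yields the same $mn\cdot d(u,v)$ bound, but the key step---bounding the hitting time across a single edge---is handled differently from the paper. The paper uses Kac's lemma (the mean return time to $z$ equals $1/\pi(z)=n$) together with the first-step decomposition $n = 1 + \sum_{w \in N_G(z)} (1/m)\,H(w,z)$, which gives $\sum_{w \in N_G(z)} H(w,z) = m(n-1)$ and hence $H(w,z) < mn$ for every neighbor $w$ of $z$. You instead invoke the commute-time/effective-resistance identity, normalize so that $c_G = \sum_w \pi(w) = 1$, and bound $R_{\mathrm{eff}}(a,b) \le 1/c(a,b) = nm$ via the direct edge. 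Both proofs then sum along a shortest path in the same way. The paper's route is a bit more elementary (no electrical-network machinery needed), while yours is cleaner to state once the commute-time identity is granted and makes the role of reversibility more explicit; neither gains a meaningful constant over the other.
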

\begin{proof}
We consider a Markov chain $\chainX = \{X(t) \mid t=0,1,\dots\}$,
where each $X(t) \in V_G$
 represents the location of a token
 (\ie the agents that the token stays on) 
in configuration $C_t$ (\ie the $t$-th configuration in $\Xi$).
%By definition, $X_{v}(0) = v$.
For $t \ge 1$ and $x,y \in V_G$,
the probability
$\Pr(X(t) = y \mid X(t-1) = x)$ is independent of $t$,
denoted by $P_{\chainX}(x,y)$.
%Thus, we denote this probability by $P_{x,y}$.
Probability $P_{\chainX}(x,y)$ is calculated as follows:
$P_{\chainX}(x,y)=1/m$ if $(x,y) \in E_G$,
$P_{\chainX}(x,y)=1-\delta_x/m$ if $x=y$;
otherwise, $P_{\chainX}(x,y)= 0$,
where $\delta_x = |N_G(v)|$.
The symmetric structure of the chain, \ie $P_{\chainX}(x,y)=P_{\chainX}(y,x)$
for all $x,y \in V$, gives $\sum_{x \in V}P_{\chainX}(x,y)=1$
for any $y$.
Therefore, $\piX=(\piX(x_1),\piX(x_2),\dots,\piX(x_n))=(1/n,1/n,\dots,1/n)$ is the unique stationary distribution on $\chainX$
(\ie $\piX P_{\chainX} = \piX$), where $V_G=\{x_1,x_2,\dots,x_{n}\}$.
For $x,y \in V_G$,
we define \emph{the hitting time} $H_{\chainX}(x,y)$
as the expected number of transition steps in the chain
$\chainX$ from state $x$ to $y$.
We have $H_{\chainX}(z,z) = 1/\pi(z) = n$
for any agent $z \in V$.
We also have
$H_{\chainX}(z,z)=1 + \sum_{w \in N_G(z)} (1/m) \cdot H_{\chainX}(w,z)$.
Therefore,
$\sum_{w \in N_G(z)} H_{\chainX}(w,z) = m(n-1)$.
Thus, for any $(w,z) \in E_{G}$,
we have $H_{\chainX}(w,z) < mn$.

%To prove the lemma,
%it suffices to show $H_{\chainX}(v,u) \le mn\cdot d(u,v)$.
Let $w_0,w_1,\dots,w_{d(u,v)}$ ($w_0 = u$ and $w_{d(u,v)} = v$)
be the shortest path
from $u$ to $v$ in $G$.
Then, $H_{\chainX}(u,v) \le \sum_{i = 0}^{l-1} h_{w_i,w_{i+1}} < mn\cdot d(u,v)$, from which the lemma immediately follows.
\end{proof}

\begin{lemma}
\label{lem:covertime}
In execution $\Xi$,
for any $v \in V_G$, 
token $t_v$ visits all the agents in $V_G$
within $2mn^2$ steps in expectation.
\end{lemma}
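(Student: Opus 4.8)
The plan is to reduce the cover time of token $t_v$ to a sum of single-edge hitting times by traversing a spanning tree, which is the classical Aleliunas--Karp--Lipton--Lov\'asz--Rackoff argument, and then to plug in the per-edge estimate already extracted inside the proof of Lemma~\ref{lem:hittingtime}.

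First I would fix an arbitrary spanning tree $T$ of $G$ (which exists since $G$ is connected) and root it at $v$. A depth-first traversal of $T$ produces a closed walk $v = a_0, a_1, \dots, a_{2(n-1)} = v$ that traverses each of the $n-1$ tree edges exactly twice and that visits every agent in $V_G$ at least once. The key feature is that each consecutive pair $(a_i, a_{i+1})$ is a tree edge and hence lies in $E_G$.

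Next I would bound the expected number of steps for $t_v$ to visit all agents by the expected number of steps for it to follow the tour in order: first to reach $a_1$, then from its current location to reach $a_2$, and so on up to $a_{2(n-1)}$. Completing this ordered traversal certainly entails having visited all of $V_G$, so the cover time is at most this quantity. By the strong Markov property, once $t_v$ first reaches $a_i$ the expected additional number of steps to reach $a_{i+1}$ is exactly $H_{\chainX}(a_i, a_{i+1})$; hence by linearity of expectation the expected cover time is at most $\sum_{i=0}^{2(n-1)-1} H_{\chainX}(a_i,a_{i+1})$.

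Finally I would invoke the bound $H_{\chainX}(w,z) < mn$ for every $(w,z) \in E_G$ that is established within the proof of Lemma~\ref{lem:hittingtime}. Since the tour consists of $2(n-1)$ consecutive edges, all belonging to $E_G$, the sum is strictly less than $2(n-1)\cdot mn \le 2mn^2$, giving the claimed bound. The only step needing genuine care is the reduction above: justifying that the true cover time is dominated by the ordered-traversal time, and that the strong Markov property makes each segment contribute exactly its hitting time $H_{\chainX}(a_i,a_{i+1})$. Everything else is a direct substitution of the already-proven single-edge estimate.
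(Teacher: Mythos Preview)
Your proposal is correct and follows essentially the same approach as the paper: take a spanning-tree Euler tour of length $2(n-1)$ starting at $v$, bound the cover time by the sum of the successive hitting times along the tour, and apply the per-edge estimate $H_{\chainX}(w,z) < mn$ from the proof of Lemma~\ref{lem:hittingtime}. The paper's proof is terser (it simply invokes Lemma~\ref{lem:hittingtime} on each tour edge), but the underlying argument is identical.
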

\begin{proof}
%The lemma immediately follows from Lemma \ref{lem:hittingtime}.
%because token $t_v$ traverses an spanning tree of $G$
Let $v_0,v_1,\dots,v_{2n-2}$ ($v_0=v_{2n-2}=v$ and $(v_i,v_{i+1}) \in E_G$ for all $i=0,1,\dots,2n-1$) be a tour on an arbitrary spanning tree of $G$.
The lemma immediately follows from Lemma \ref{lem:hittingtime}
because token $t_v$ moves from $v_i$ to $v_{i+1}$ within
$mn$ steps in expectation for each $i=0,1,\dots,2n-1$.
%Let $T=(V_G,E_T)$ be an arbitrary spanning tree of $G$,
%where $E_T$ is the set of $2n-2$ \emph{directed edges}
%constituting the tree (Note that $(u,v) \in E_T$ yields $(v,u) \in E_T$). 
%The expected number of steps until token $t_v$
%visits all the agents is upper bounded by
%$\sum_{(u,v) \in E_T} H_{\chainX}(u,v)$,
%where $H_{\chainX}(u,v)$ is the hitting time from $u$ to $v$
%in the chain $\chainX$,
%defined in the proof of Lemma \ref{lem:hittingtime}.
%We have 
%$\sum_{(u,v) \in E_T} H_{\chainX}(u,v) < (2n-2)mn < 2mn^2$
%by Lemma \ref{lem:hittingtime}.
\end{proof}

\begin{lemma}
\label{lem:meet_all_tokens}
%Let $C_0 \in \call(\prank(\nu,\mu),G)$
%and let $v \in V_G$.
%Let $v$ be any agent in $V_G$.
In execution $\Xi$,
for any $v \in V_G$,
all the $n$ tokens visit agent $v$
within $O(mnd \log n)$ steps in expectation.
\end{lemma}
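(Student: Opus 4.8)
The plan is to reduce the claim to a coupon-collector argument driven by the single-token hitting-time bound of Lemma~\ref{lem:hittingtime}. The key observation is that, although the $n$ tokens do not move independently (interacting agents \emph{swap} their tokens), the marginal trajectory of each individual token $t_u$ is exactly the random walk governed by the Markov chain $\chainX$ analyzed in Lemma~\ref{lem:hittingtime}. In particular, from \emph{any} current location $x$, the expected number of steps for that token to reach $v$ is $H_{\chainX}(x,v) \le mn\cdot d(x,v) \le mnd$, where $d$ is the diameter. I would stress that this bound is uniform over all starting positions $x$, which is precisely what makes the epoch argument below work.

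First, I would partition the execution into consecutive \emph{epochs} of length $L = 2mnd$ steps each. Fix a token $t_u$ and condition on its location at the beginning of an epoch; by the strong Markov property the remaining expected hitting time to $v$ is at most $mnd$, so Markov's inequality gives that $t_u$ fails to visit $v$ during that epoch with probability at most $mnd/L = 1/2$, regardless of where it started the epoch. Iterating this over $k$ epochs (again via the strong Markov property, since the event of not having visited $v$ by the end of epoch $k-1$ is measurable with respect to the history up to that point) yields
$$\Pr(t_u \text{ has not visited } v \text{ within } kL \text{ steps}) \le 2^{-k}.$$

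Next I would take a union bound over the $n$ tokens: the probability that \emph{some} token has not yet reached $v$ after $k$ epochs is at most $n\cdot 2^{-k}$. Let $N$ denote the number of epochs until all $n$ tokens have visited $v$. Summing the tail,
$$\ex{N} = \sum_{k \ge 0} \Pr(N > k) \le \sum_{k=0}^{\lceil \log_2 n\rceil} 1 + \sum_{k > \log_2 n} n\, 2^{-k} = O(\log n),$$
where the first sum contributes $O(\log n)$ and the geometric tail contributes $O(1)$. Multiplying by the epoch length gives that all tokens visit $v$ within $L \cdot \ex{N} = O(mnd\log n)$ steps in expectation, as claimed.

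I expect the only delicate point to be justifying that the per-token failure probabilities multiply across epochs despite the coupling among tokens: this is handled by invoking the strong Markov property for each token's own chain and by relying on the union bound, which uses only the marginal law of each token and is therefore insensitive to the (complicated) joint dependence introduced by the swapping mechanism. Everything else is routine once the uniform-over-start-position hitting bound $H_{\chainX}(x,v)\le mnd$ is in hand.
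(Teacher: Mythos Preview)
Your proposal is correct and follows essentially the same route as the paper: use the uniform hitting-time bound $H_{\chainX}(x,v)\le mnd$ from Lemma~\ref{lem:hittingtime}, apply Markov's inequality to get a constant success probability per $2mnd$-step window, iterate over $O(\log n)$ windows, and take a union bound over the $n$ tokens. The paper's version is terser (it stops at ``with probability $1-O(1/n)$, from which the lemma immediately follows'' and leaves the final geometric-restart step implicit), whereas you compute $\ex{N}$ explicitly by summing the tail and you spell out the point about the marginal law of each token being the chain $\chainX$ despite the swap coupling; these are welcome clarifications but not a different argument.
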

\begin{proof}
Let $u$ be any agent in $V_G$.
It immediately follows from Lemma \ref{lem:hittingtime}
that token $t_u$ visits agent $v$ within $mnd$ steps in expectation.
Therefore, by Markov's inequality, 
$t_u$ visits $v$ within $2mnd$ steps
with probability at least $1/2$.
%Repeating this trial $3\log_2 n$ times,
Therefore, they meet within $2\log_2 n \cdot (2mnd)=4mnd \log_2 n$ steps with probability at least $1-1/n^2$.
By the union bound,
all the $n$ tokens $(t_u)_{u \in V_G}$ visit $v$ within $4mnd \log_2 n$ steps
with probability $1-O(1/n)$,
from which the lemmas immediately follows.
\end{proof}

\begin{lemma}[\cite{Sud+20B}]
\label{lem:meet_each_other}
In execution $\Xi$,
all the $n$ tokens meet each other
within $O(mn^2 d \log n)$ steps in expectation.
\end{lemma}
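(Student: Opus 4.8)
The plan is to first bound the expected time for a single pair of tokens to meet, and then lift this to all $\binom{n}{2}$ pairs by exactly the amplification already used in Lemma~\ref{lem:meet_all_tokens}. Concretely, if I can show that any two fixed tokens $t_u$ and $t_w$ meet within $O(mn^2 d)$ steps in expectation, then Markov's inequality gives that they meet within twice that bound with probability at least $1/2$; repeating this $\Theta(\log n)$ times while restarting the clock drives the failure probability below $n^{-3}$, so a union bound over the fewer than $n^2$ pairs shows that \emph{all} pairs have met within $O(mn^2 d\log n)$ steps with probability $1-O(1/n)$. The usual geometric-restart argument then converts this high-probability statement into the claimed expectation bound. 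Thus the whole lemma reduces to the single-pair meeting time.

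For the single pair I would mimic the Markov-chain analysis of Lemma~\ref{lem:hittingtime}, but on the product chain $\chainZ=\{Z(t)\}$ whose state $Z(t)=(X(t),Y(t))\in V_G\times V_G$ records the locations of $t_u$ and $t_w$ simultaneously. Since each agent always carries exactly one token, the two tokens never share an agent, so $\chainZ$ lives on the set of ordered pairs of \emph{distinct} agents, of which there are $n(n-1)$. Each interaction moves at most one of the two tokens, or swaps them when the chosen edge joins their two agents (which is precisely a meeting), and every such transition has probability $1/m$; hence $P_{\chainZ}$ is symmetric, $\chainZ$ is reversible, and its unique stationary distribution is uniform, $\piZ \equiv 1/(n(n-1))$. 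Copying the computation in Lemma~\ref{lem:hittingtime} gives $H_{\chainZ}(z,z)=n(n-1)$ and $\sum_{z'} H_{\chainZ}(z',z)=m\big(n(n-1)-1\big)<mn^2$, where the sum ranges over the product-neighbours $z'$ of $z$; consequently the hitting time between any two product-adjacent states is less than $mn^2$. Walking one token along a shortest path of length at most $d$ toward the other then produces a sequence of at most $d$ product-adjacent states, so the two tokens become adjacent within $O(mn^2 d)$ steps in expectation.

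The delicate point, and the one I expect to be the main obstacle, is passing from ``the tokens are adjacent'' to ``the tokens actually meet'': once $t_u$ and $t_w$ sit on neighbouring agents, the connecting edge still fires only with probability $1/m$ per step, and a naive ``reach adjacency, then wait for the edge'' argument would lose an extra factor of $m$. To avoid this I would treat the meeting as a hitting event of the reversible chain $\chainZ$ directly rather than as two separate phases: because $\piZ$ is uniform, the stationary rate at which a meeting (swap) edge is traversed is $\Theta(1/n^2)$, so from a well-mixed state meetings occur within $O(n^2)$ steps, while the worst-case start contributes only the $O(mn^2 d)$ hitting time to adjacency computed above. Combining the two yields the single-pair bound $O(mn^2 d)+O(n^2)=O(mn^2 d)$, which feeds into the amplification of the first paragraph. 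The remaining work is purely in making the ``rate $\Theta(1/n^2)$ plus hitting time to adjacency'' estimate rigorous through the standard reversible-chain hitting-time identities, carried out exactly as in the proof of Lemma~\ref{lem:hittingtime}.
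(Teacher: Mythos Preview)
Your overall strategy matches the paper's exactly: a symmetric product Markov chain on ordered pairs of distinct agents, uniform stationary distribution, the identity $H(z,z)=1/\pi(z)$ giving an $O(mn^2)$ bound on the hitting time between \emph{adjacent} product states, a walk along a shortest path of length $\le d$, and then Markov's inequality plus $\Theta(\log n)$ repetitions and a union bound over pairs. So the reduction to a single-pair $O(mn^2 d)$ bound and the amplification step are both correct and are exactly what the paper does.

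The gap is precisely where you put your finger on it. In your chain $\chainZ$ a meeting is not a state but a particular edge traversal $(a,b)\to(b,a)$, so the adjacent-state hitting-time bound you derived does not certify that a meeting occurred: the chain can reach $(b,a)$ from $(a,b)$ without ever taking the swap edge. Your proposed fix---combining ``hitting time to adjacency'' with a ``stationary meeting rate $\Theta(1/n^2)$''---does not go through as stated, because after hitting an adjacent pair the chain is not in stationarity, and the identities from Lemma~\ref{lem:hittingtime} concern hitting \emph{states}, not traversing \emph{edges}. The paper resolves this with a single clean trick: it doubles the state space by attaching a one-bit flag $s\in\{0,1\}$ that is flipped exactly on a swap, giving a chain on $2n(n-1)$ states that is still symmetric (hence uniform stationary, hence $H(z,z)=2n(n-1)$ and adjacent hitting time $<2mn^2$). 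Now ``the tokens have met at least once'' is the event of hitting any state with the opposite flag, and in particular the final step of the shortest-path walk, $(w_{l-1},v,0)\to(v,w_{l-1},1)$, is itself a product-adjacent transition whose hitting time is $<2mn^2$. With the flag, the whole path argument yields $M_{u,v}<2mn^2 d$ with no separate ``adjacency then wait'' phase. Adding this flag bit is the one missing ingredient in your proposal; everything else you wrote is correct.
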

\begin{proof}
Let $u$ and $v$ be any two distinct agents.
%and let $d(u,v)$ be the distance between $u$ and $v$ in $G$.
It suffices to show that two tokens $t_u$ and $t_v$
meet within $O(mn^2 d \log n)$ steps with probability $1-O(1/n^3)$:
then, by the union bound, all tokens meet each other
with probability $1-O(1/n)$ in every $O(mn^2 d \log n)$ steps,
yielding the lemma.

Let %$\statesone = (V_G \times V_G \setminus \{(w,w) \mid w \in V_G\})\times \{1\}$
%$\statesone = \{(u,v,1) \mid u,v \in V_G, u \neq v\}$
%and $\statestwo = \{(u,v,2) \mid u,v \in E_G\}$.
$\states = \{(w_1,w_2,s) \mid w_1, w_2 \in G, w_1 \neq w_2, s \in \{0,1\}\}$. 
Consider a Markov chain $\chainY = \{Y(t) \mid t=0,1,\dots\}$,
where
%each $Y(t) \in \states$ represents
%the locations of two token $t_u$ and $t_v$ in configuration $C_t$ (\ie the $t$-th configuration in $\Xi$).
each $Y(t)$ belongs to $\states$.
For each state $(w_1,w_2,s) \in \states$,
$w_1$ and $w_2$ represent the locations of $t_u$ and $t_v$,
respectively, while the last element $s \in \{0,1\}$
is called the \emph{flag} of the state and 
will be used for the simplicity of analysis.
%By definition, $Y(0) = (u,v)$.
For $t \ge 1$ and $x,y \in \states$,
the probability
$\Pr(Y(t) = y \mid Y(t-1) = x)$ is independent of $t$,
denoted by $P_{\chainY}(x,y)$.
%Thus, we denote this probability by $P_{x,y}$.
For $(a,b,s_1), (c,d,s_2) \in \states$,
we write $(a,b,s_1) \to (c,d,s_2)$
if (i) $(a,c) \in E_G \land b=d \land s_1 = s_2$,
(ii) $a=c \land (b,d) \in E_G \land s_1 = s_2$,
or (iii) $(a,b) \in E_G \land a=d \land b=c \land s_1 \neq s_2$.
Intuitively, the first (resp.~second) case represents that
$t_u$ (resp.~$t_v$) moves from $a$ to $c$
(resp.~from $b$ to $d$).
The third case represents that
the agents $a$ and $b$ swaps the tokens $t_u$ and $t_v$.
Note that we keep the flag of the state same
in the first and second case,
while we flip the flag, from 0 to 1 or from 1 to 0,
in the third case.
Thus, we can bound the expected number of steps before $t_u$ and $t_v$ meet in execution $\Xi$
by bounding the expected number of transition steps
before the flag is flipped in this Markov chain $\chainY$
that starts from $Y(0)=(u,v,0)$.
Probability $P_{\chainY}(x,y)$ is defined as follows:
$P_{\chainY}(x,y)=1/m$ if $x \to y$,
$P_{\chainY}(x,y)=1-|\{z \mid x \to z\}|/m$
if $x=y$, $P_{\chainY}(x,y)= 0$ otherwise.
The symmetric structure of the chain, \ie $P_{\chainY}(x,y)=P_{\chainY}(y,x)$
for all $x,y \in \states$,
gives $\sum_{x \in \states} P_{\chainY}(x,y)=1$
for any $y \in \states$.
Therefore, $\piY=(\piX(y_1),\piY(y_2),\dots,\piY(y_{2n(n-1)}))=\left(\frac{1}{2n(n-1)},\frac{1}{2n(n-1)},\dots,\frac{1}{2n(n-1)}\right)$
is the unique stationary distribution on $\chainY$
(\ie $\piY P_{\chainY} = \piY$), where $\states=\{y_1,y_2,\dots,y_{2n(n-1)}\}$.
For $x,y \in \states$,
we define \emph{the hitting time} $H_{\chainY}(x,y)$
as the expected number of transition steps in the chain
$\chainY$ from state $x$ to $y$.
We have $H_{\chainY}(z,z) = 1/\pi(z) = 2n(n-1)$
for any $z \in S$.
We also have
$H_{\chainY}(z,z)=1 + \sum_{w \in S \suchthat w \to z} (1/m) \cdot H_{\chainY}(w,z)$.
Therefore,
$\sum_{w \in S \suchthat w \to z} H_{\chainY}(w,z) < 2mn(n-1)$.
Thus, for any two distinct $x,y \in S$ such that $x \to y$, 
we have $H_{\chainY}(x,y) < 2mn^2$.

%Let $u$ and $v$ be any two distinct agents
%Let $d(u,v)$ be the distance between $u$ and $v$ in $G$.
Let $w_0,w_1,\dots,w_l$ ($w_0=u$, $w_l=v$) 
be an arbitrary shortest path from $u$ to $v$ in $G$.
Clearly, the expected number of steps until two tokens
$t_u$ and $t_v$ meet in execution $\Xi$, say $M_{u,v}$,
is upper bounded
by $\sum_{i=0}^{l-2} H_{\chainY}((w_i,v,0),(w_{i+1},v,0))+H_{\chainY}((w_{l-1},v,0),(v,w_{l-1},1))$.
Since $(w_i,v,0) \to (w_{i+1},v,0)$ holds
for any $i=0,1,\dots,l-2$
and $(w_{l-1},v,0) \to (v,w_{l-1},1)$ holds,
we have $M_{u,v} < 2mn^2 l \le 2mn^2 d$.
Therefore, by Markov's inequality, 
$t_u$ and $t_v$ meet within $4mn^2 d$ steps
with probability at least $1/2$.
%Repeating this trial $3\log_2 n$ times,
Hence, they meet within $3\lceil \log_2 n \rceil \cdot (4mn^2 d)=12mn^2d \lceil \log_2 n \rceil$ steps with probability at least $1-1/n^3$.
%By the union bound,
%all the $n$ tokens meet each other within $6mn^2d \log_2 n$ steps
%with probability $1-O(1/n)$,
%from which the lemmas immediately follows.
%Let $u,v$ be any two distinct agents in $V_G$.
\end{proof}

\begin{lemma}
\label{lem:some_drift} 
Let $k$ be any positive integer.
%For some $v \in V_G$,
There exists some agent $v \in V_G$ such that 
the expected number of steps until token $t_v$ moves 
$k$ times is $O(nk)$.
%token $t_v$ moves between agents at least $t$ times
%with a constant probability
%within a sufficiently large $\Theta(nt)$ steps.
%Let $C_0$ be any configuration in $\call(\prank(\nu,\mu),G)$
%and let $v$ be any agent in $V_G$.
%In execution $\Xi_{\prank(\nu,\mu)}(G,C_0,\rs)$,
%with a constant probability,
%the token that stays in $v$ in $C_0$ moves $\Omega(t)$ times
%within a sufficiently large $O(nt)$ steps.
\end{lemma}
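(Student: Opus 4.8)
The plan is to analyze the trajectory of a single token in isolation. First I would observe that, by the symmetry of the token swap, the position of token $t_v$ evolves exactly as the lazy random walk $\chainX$ of Lemma~\ref{lem:hittingtime}, \emph{independently} of the other tokens: at each step it stays put unless the chosen directed edge is incident to its current location $x$, which happens with probability $\delta_x/m$, in which case it moves to a uniformly random neighbor (here $\delta_x=|N_G(x)|$, so $\sum_{x\in V_G}\delta_x=2m$). Writing $\tau_v$ for the number of steps until $t_v$ has moved $k$ times, I would decompose $\tau_v=\sum_{j=1}^{k}W_j$, where $W_j$ is the waiting time between the $(j-1)$-st and the $j$-th move. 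Conditioned on the token sitting at $x$ just before that move, $W_j$ is geometric with success probability $\delta_x/m$, so $\ex{W_j\mid \text{position } x}=m/\delta_x$.

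The key idea is to average $\ex{\tau_v}$ over the starting vertex $v$ using \emph{degree-proportional} weights rather than uniform ones. The embedded jump chain (the sequence of positions recorded only at the moves of $t_v$) is the simple random walk on $G$, whose stationary distribution is $\pi_J(x)=\delta_x/(2m)$. If the starting vertex is drawn from $\pi_J$, then by stationarity the position before every move is distributed as $\pi_J$, so the expected length of each move is $\sum_{x\in V_G}\pi_J(x)\cdot m/\delta_x=\sum_{x\in V_G}\tfrac12=n/2$, independent of $j$. Summing over the $k$ moves yields $\sum_{v\in V_G}\pi_J(v)\,\ex{\tau_v}=nk/2$. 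Since the weights $\pi_J(v)$ form a probability distribution, this is a genuine weighted average equal to $nk/2$, so at least one vertex $v$ satisfies $\ex{\tau_v}\le nk/2=O(nk)$, which is the desired agent.

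I expect the main obstacle to be conceptual rather than computational: because the statement is existential, one is tempted to average $\ex{\tau_v}$ uniformly over $v$, but this fails. For instance, in a graph built from a clique $K_s$ with a pendant attached to each clique vertex (so $n=2s$), roughly half of the tokens start at degree-$1$ vertices and need $\Theta(m)=\Theta(n^2)$ steps merely for their first move, so the uniform average of $\ex{\tau_v}$ is $\Omega(n^2)$, which already exceeds $O(nk)$ whenever $k=o(n)$. The correct move is to weight by the jump-chain stationary distribution $\pi_J$, which exactly cancels the $1/\delta_x$ waiting times and collapses every per-move cost to the constant $n/2$; spotting this weighting is the crux. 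The only other point requiring care is justifying that the single-token marginal is genuinely the Markov chain $\chainX$, so that $\ex{\tau_v}$ depends on $v$ alone; this holds because a token's displacement is determined solely by whether the scheduler picks an edge at its current endpoint, irrespective of where the other tokens sit.
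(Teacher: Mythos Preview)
Your proposal is correct and follows essentially the same argument as the paper: both introduce the embedded jump chain (the simple random walk on $G$), use its degree-proportional stationary distribution $\delta_x/(2m)$ to average the expected waiting times $m/\delta_x$, obtain the weighted average $nk/2$, and conclude existence by the pigeonhole principle on a convex combination. Your additional remarks on why uniform averaging fails and on the single-token marginal being Markov are sound and do not depart from the paper's approach.
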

\begin{proof}
Let $s_u$ be the expected number of steps until
token $t_u$ moves $k$ times.
%We are now interested in $\min_{u \in }$
It suffices to show $\min_{u \in V_G} s_u = O(nk)$.
To analyze $\min_{u \in V_G} s_u$,
we introduce a Markov chain
$\chainZ = \{Z(t) \mid t=0,1,\dots\}$,
where each $Z(t) \in V_G$
represents the location of a token
 (\ie the agents that the token stays on)
after it moves $t$ times.
%By definition, $X_{v}(0) = v$.
For $t \ge 1$ and $x,y \in V_G$,
the probability
$\Pr(Z(t) = y \mid Z(t-1) = x)$ is independent of $t$,
denoted by $P_{\chainZ}(x,y)$.
%Thus, we denote this probability by $P_{x,y}$.
Probability $P_{\chainZ}(x,y)$ is calculated as follows:
$P_{\chainZ}(x,y)=1/\delta_x$ if $(x,y) \in E_G$,
$P_{\chainZ}(x,y)= 0$ otherwise,
where $\delta_x = |N_G(v)|$.
%For this chain,
Let $V_G=\{z_1,z_2,\dots,z_n\}$.
Then,
$\piZ=(\piZ(z_i))_{i=1,2,\dots,n}$ is a stationary distribution\footnotemark{},
where $\piZ(z_i)= \delta_{z_i}/2m$
because $\piZ P_{\chainZ} = \piZ$.
\footnotetext{
This Markov chain $\chainZ$ is not ergodic when $G$ is bipartite.
However, this does not matter in this proof because
we do not use the recurrent time $H_{\chainZ}(z,z)$ unlike the proofs of Lemmas \ref{lem:hittingtime} and \ref{lem:meet_each_other}.
}

%%$\piX=(\piX(x_1),\piX(x_2),\dots,\piX(x_n))=(1/n,1/n,\dots,1/n)$ 
%The stationary distribution of this chain is
%$\piZ$

If a token visits agent $w \in V_G$,
then it needs $m/\delta_w$ steps in expectation
to leave $w$, \ie move to another agent from $w$.
Thus, we assign each agent $x \in V_G$
its \emph{weight} $W(x) \overset{\text{def}}{=} m/\delta_x$;
then, $s_u = \ex{\sum_{t=0}^{k-1} W(Z(t)) \longmid Z(0) = u}$
holds for any $u \in V_G$.
Assume that the initial state $Z_0$ is now set according to
the stationary distribution,
\ie $\Pr(Z(0) = z_i) = \piZ(z_i) = \delta_{z_i}/2m$ for any $i=1,2,\dots, n$.
Since $\piZ$ is a stationary distribution,
we always have the same distribution thereafter,
that is, we have  $\Pr(Z(t) = z_i) = \piZ(z_i)$
for any $t = 0,1,\dots$ and $i=0,1,\dots, n$.
Therefore, under this assumption, 
we have
\begin{align*}
\ex{\sum_{t=0}^{k-1} W(Z(t))}
= k \sum_{i=0}^n \piZ(z_i) W(z_i)
= k \sum_{i=0}^n \frac{\delta_{z_i}}{2m} \cdot \frac{m}{\delta_{z_i}}=\frac{kn}{2}. 
\end{align*}
We also have
$\ex{\sum_{t=0}^{k-1} W(Z(t))} = \sum_{u \in V_G} \piZ(u) s_u$.
Since $\sum_{u \in V_G} \piZ(u)=1$,
there must be at least one agent $u \in V_G$ such that
$s_u \le kn/2$.
Thus, $\min_{u \in V_G} s_u = O(kn)$.
%To show $s_u = $
\end{proof}

\begin{lemma}
\label{lem:drift} 
Let $k$ be any positive integer.
For any $v \in V_G$,
the expected number of steps until token $t_v$ moves 
$k$ times is $O(nk+mnd)$.
%token $t_v$ moves between agents at least $t$ times
%with a constant probability
%within a sufficiently large $\Theta(nt)$ steps.
%Let $C_0$ be any configuration in $\call(\prank(\nu,\mu),G)$
%and let $v$ be any agent in $V_G$.
%In execution $\Xi_{\prank(\nu,\mu)}(G,C_0,\rs)$,
%with a constant probability,
%the token that stays in $v$ in $C_0$ moves $\Omega(t)$ times
%within a sufficiently large $O(nt)$ steps.
\end{lemma}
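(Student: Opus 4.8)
The plan is to bootstrap the existential guarantee of Lemma \ref{lem:some_drift} into a bound that holds for \emph{every} starting agent, at the cost of an additive $mnd$ term accounting for the time needed to reach the ``good'' agent. Fix any $v \in V_G$, and let $u^\ast \in V_G$ be an agent for which Lemma \ref{lem:some_drift} holds, \ie the expected number of steps until token $t_{u^\ast}$ moves $k$ times is $O(nk)$. Since a single token's location evolves as a time-homogeneous Markov chain on $G$ (this is precisely the chain $\chainX$ of Lemma \ref{lem:hittingtime}), independent of how the token is labelled or swapped, I can decompose the trajectory of $t_v$ into (i) an initial phase until $t_v$ first visits $u^\ast$, and (ii) a subsequent phase started from $u^\ast$.

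For phase (i), Lemma \ref{lem:hittingtime} gives that $t_v$ visits $u^\ast$ within $mn \cdot d(v,u^\ast) \le mnd$ steps in expectation, where the inequality uses that $d(v,u^\ast)$ is at most the diameter $d$. For phase (ii), let $\tau$ be the (random) step at which $t_v$ first reaches $u^\ast$. By the strong Markov property, conditioned on $\tau$, the future trajectory of $t_v$ is distributed exactly as a fresh token started at $u^\ast$; hence the expected number of additional steps until $t_v$ makes $k$ further moves equals the quantity bounded in Lemma \ref{lem:some_drift}, namely $O(nk)$. Since making $k$ moves after step $\tau$ certainly implies making at least $k$ moves overall, the number of steps until $t_v$ moves $k$ times is at most $\tau$ plus the length of phase (ii). Taking expectations and summing the two bounds yields $O(mnd) + O(nk) = O(nk + mnd)$, as required.

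The only delicate point is the appeal to the strong Markov property to ``restart'' the walk at $u^\ast$: one must check that a single token's position is genuinely Markovian even though, in the underlying execution, tokens are swapped at interactions. This is fine, because a swap merely relabels which token occupies which endpoint and does not affect the \emph{location} process, so each $t_w$ individually follows the transition rule of $\chainX$. I expect the main (minor) obstacle to be phrasing this decomposition rigorously as an upper bound --- in particular, noting that the initial phase may already contribute some of the required moves, so counting ``$k$ further moves after $\tau$'' only over-counts and therefore remains a valid upper bound on the time to accumulate $k$ moves in total.
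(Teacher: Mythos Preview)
Your proposal is correct and follows essentially the same approach as the paper: first use Lemma~\ref{lem:hittingtime} to bound the expected time for $t_v$ to hit the ``good'' agent $u^\ast$ from Lemma~\ref{lem:some_drift} by $mnd$, then invoke Lemma~\ref{lem:some_drift} (via the Markov property) to bound the remaining time by $O(nk)$. Your write-up is in fact more careful than the paper's, which simply states the two pieces and adds them; your explicit justification of the strong Markov property and the over-counting observation are exactly the points the paper leaves implicit.
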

\begin{proof}
By Lemma \ref{lem:some_drift},
there exists an agent $u \in V_G$ such that
after visiting $u$, token $t_v$ moves $k$ times
within $O(nk)$ steps in expectation. 
By Lemma \ref{lem:hittingtime}, 
token $t_v$ visits $u$ within $mnd$ steps in expectation.
In total, token $t_v$ moves $k$ times
within $O(nk+mnd)$ steps in expectation.
\end{proof}

\section{Leader Election and Ranking}
\label{sec:ranking}

%\subsection{Impossibility}
%
%\subsection{Protocol $\bfprank$}
%\label{sec:prank}
The goal of this section is to give a necessary and sufficient
condition to solve $\ranking$ and $\elect$ on
knowledge $\nu$, provided that $\mu$ gives no information,
\ie $\mu= \mdomain$.
For a necessary condition, we have the following lemma.

\begin{lemma}[\cite{jikoantei,SIW12,Sud+20}]
\label{lem:rank_impossible} 
Given knowledge $\nu$ and $\mu$,
there exists no self-stabilizing protocol that solves $\elect$
in arbitrary graphs
if $\graph_{n_1,*} \cup \graph_{n_2,*} \subseteq \graph_{\nu,\mu}$for some two distinct $n_1,n_2 \in \ndomain$.
\end{lemma}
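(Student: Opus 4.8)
The plan is to prove the impossibility by reducing it to complete graphs and then arguing by contradiction. Since $\graph_{n_1,*}\cup\graph_{n_2,*}\subseteq\graph_{\nu,\mu}$, the complete graphs $K_{n_1}$ and $K_{n_2}$ both lie in $\graph_{\nu,\mu}$, and any protocol that solves $\elect$ in arbitrary graphs given $\nu,\mu$ must in particular be self-stabilizing on both $K_{n_1}$ and $K_{n_2}$, using one common state set and one common transition function. Hence it suffices to show that no protocol $P$ self-stabilizingly solves $\elect$ on both $K_n$ and $K_{n'}$, where I set $n=\min\{n_1,n_2\}$ and $n'=\max\{n_1,n_2\}$, so that $n<n'$. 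I would assume such a $P$ exists and derive a contradiction.

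The core idea is that on a complete graph a configuration is just a multiset of states, and that an interaction between two followers of a \emph{final} configuration cannot raise the leader count. First I would invoke Observation \ref{obs:ss}: since $P$ is self-stabilizing on $K_{n'}$, every final (closed, strongly connected) set of configurations is safe, so every final configuration of $K_{n'}$ has exactly one leader and $n'-1\ge n$ followers. I fix one such final configuration $C'$ and define a configuration $D$ on $K_n$ by copying the states of any $n$ of the $n'-1$ follower agents of $C'$; then $D$ contains no leader.

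The crux is to show that, starting from $D$, the protocol $P$ can never create a leader on $K_n$, which contradicts self-stabilization on $K_n$ (Definition \ref{def:ss}), since then no execution from $D$ ever reaches a configuration with a leader, let alone a safe one. I would prove this by maintaining the invariant that the multiset of states of the current $K_n$-configuration is a size-$n$ sub-multiset of the follower states of some final configuration $\tilde C'$ of $K_{n'}$. To propagate the invariant across one interaction $(u,v)$ of $K_n$, which is necessarily between two followers in states $p,q$ occupying two distinct follower slots of $\tilde C'$, I would mirror it by applying the same interaction to the two corresponding follower agents of $\tilde C'$ inside $K_{n'}$ (they are adjacent because $K_{n'}$ is complete). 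Closure of the final set keeps the resulting $K_{n'}$-configuration final, hence safe, so its leader is untouched and the two interacting agents remain followers; the updated states $p',q'$ then re-embed the new $K_n$-configuration as $n$ followers of the new final configuration. Thus the leader count of the $K_n$-execution stays $0$ forever.

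The hard part will be making this mirroring fully rigorous: one must use anonymity and completeness of $K_{n'}$ to treat configurations as multisets and to realize the chosen $K_n$-interaction as a genuine interaction between two specific, distinct follower agents of $\tilde C'$, and one must verify that the post-interaction states re-embed as $n$ followers of a configuration that is still final, hence still safe. Once this invariant is in place the contradiction is immediate, and the lemma follows by the reduction to $K_{n_1},K_{n_2}$ established in the first paragraph.
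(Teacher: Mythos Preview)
Your proposal is correct and follows essentially the same approach as the paper: reduce to the complete graphs $K_{n_1},K_{n_2}\in\graph_{\nu,\mu}$ and then invoke the impossibility of self-stabilizing leader election on two complete graphs of different sizes. The only difference is that the paper simply cites this latter impossibility from \cite{jikoantei,SIW12,Sud+20}, whereas you spell out the standard embedding/mirroring argument (take $n$ follower states from a safe final configuration of $K_{n'}$ and show the resulting $K_n$-execution can never create a leader); your detailed argument is sound.
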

\begin{proof}
The lemma immediately follows from the fact that
there exists no self-stabilizing protocol
that solves $\elect$ in complete graphs of two different sizes,
\ie both in $K_{n_1}$ and $K_{n_2}$
for any two integers $n_1 > n_2 \ge 2$.
As mentioned in Section \ref{sec:intro},
Sudo \etal~\cite{Sud+20} gave how to prove this fact
based on the proofs of \cite{jikoantei,SIW12}.
\end{proof}

To give a sufficient condition,
we give a self-stabilizing protocol
 $\prank$, which solves the ranking problem ($\ranking$)
in arbitrary graphs given the knowledge of the exact number of agents in a population.
Specifically, this protocol assumes that the given knowledge $\nu$
satisfies $|\nu|=1$ while it does not care about the number of interactable pairs,
that is, $\prank(\nu,\mu)$ works even if $\mu$ does not give any knowledge (\ie $\mu=\mdomain$).
Let $n$ be the integer such that $\nu = \{n\}$.
%Let $n$ be the number of agents in a population
%(\ie $\nu = \{n\}$).
%By Observation \ref{obs:ss},
%we assume the uniformly random scheduler throughout
%this section.
%Section \ref{sec:prank}

\begin{algorithm}[t]
%\caption{Ranking protocol with knowledge $\nu=\{n\}$}
\caption{$\prank(\nu,\mu)$}
\label{al:prank}
\textbf{Assumption}:
$|\nu|=1$.
(Let $\nu=\{n\}$.)
\vblank
%\begin{algorithmic}
%\STATE $|\nu|=1$. (Let $n$ be the integer such that $\nu=\{n\}$.)
%\end{algorithmic}

%\textbf{Input}:
%\begin{algorithmic}
%\STATE $n$: the number of agents in the population
%%\STATE $m$: the (exact) number of interactable pairs in the population
%%\STATE $v.\id \in \{0,1,\dots,n'-1\}$: the identifiers of agent $v$. $u.\id \neq v.\id$ for all distinct $u,v \in V$.
%\end{algorithmic}

\textbf{Variables}:
\begin{algorithmic}
\STATE $\idA, \idT \in \{0,1,\dots,n-1\}$
%\STATE $\colorA, \colorT \in \{0,1\}$
%\STATE $v.\rank \in \{0,1,\dots,n-1\}$
%\STATE $v.\idT \in \{0,1,\dots,n-1\}$
\STATE $\colorA \in \{\shiro,\aka,\ao\}$, $\colorT \in \{\aka,\ao\}$, $\timerT \in \{0,1,\dots,\tmax\}$
\end{algorithmic}
\vblank

\textbf{Output function} $\outputs$: $\idA$
%\begin{algorithmic}
%\STATE Outputs $\idA$
%\end{algorithmic}
\vblank

\textbf{Interaction} between initiator $a_0$ and responder $a_1$:
\hspace{-1cm} \verb| |
\begin{algorithmic}[1]  
 \setcounter{ALC@line}{0}
% \STATE $(a_0,\idT,a_1,\idT,a_0.\colorA,a_1.\colorA) \gets (a_1,\idT,a_0,\idT,a_1,\colorA,a_1.\colorA)$
 %\STATE $(a_0,\idT,a_1,\idT) \gets (a_1,\idT,a_0,\idT)$
 %\STATE $(a_0.\colorA,a_1.\colorA) \gets (a_1,\colorA,a_1.\colorA)$
% \STATE $a_0.\idT \swap a_1.\idT$
% \STATE $a_0.\colorT \swap a_1.\colorT$
 \STATE $(a_0.\idT,a_0.\colorT,a_0.\timerT) \swap (a_1.\idT,a_1.\colorT,a_1.\timerT)$ \\
 \COMMENT{Execute the random walk of two tokens}
 \STATE \ifthen{
$a_0.\idT = a_1.\idT$
}{
$a_1.\idT \gets a_1.\idT + 1 \pmod{n}$
}
 \STATE \foralldo{
$i \in \{0,1\}$
}{
$a_i.\timerT \gets \max(0,a_i.\timerT-1)$
}
\vblank
% \COMMENT{Swap the tokens}
 \FORALL{$i \in \{0,1\}$ such that $a_i.\idA = a_i.\idT$}
 \STATE \ifthen{$a_i.\colorA=\shiro$}{$a_i.\colorA \gets a_i.\colorT$}
 \IF{$a_i.\colorA \neq a_i.\colorT$}
 \STATE $a_i.\idA \gets a_i.\idA+1 \pmod{n}$
 \STATE $a_i.\colorA \gets \shiro$
 %\STATE $a_i.\colorA \gets a_i.\colorT \gets 1-a_i.\colorT$
 %\COMMENT{Flip the colors}
 \ELSIF{$a_i.\timerT=0$}
 \STATE $a_i.\timerT \gets \tmax$
 \STATE
 \ifthen{
 $a_i.\colorA=\aka$
 }{
 $a_i.\colorA \gets a_i.\colorT \gets \ao$
 }
 \STATE
 \ifthen{
 $a_i.\colorA=\ao$
 }{
 $a_i.\colorA \gets a_i.\colorT \gets \aka$
 }
 \ENDIF
 \ENDFOR
\end{algorithmic}
\end{algorithm}

If we focus only on complete graphs,
the following simple algorithm \cite{SIW12}
is enough to solve self-stabilizing ranking
with the exact knowledge $n$ of agents:
\begin{itemize}
 \item Each agent $v$ has only one variable $v.\id \in \{0,1,\dots,n-1\}$, and
 \item Every time two agents with the same $\id$ meet, 
one of them (the initiator) increases its $\id$ by one modulo $n$.
\end{itemize}
%
%First, we explain how the protocol of Cai \etal~\cite{SIW12}
%solves $\ranking$ in a self-stabilizing fashion.
Since this algorithm assumes complete graphs,
every pair of agents in the population eventually has interactions. Therefore, as long as two agents have the same identifiers, they eventually meet and the collision of their identifiers is resolved. However, this algorithm does not work in arbitrary graphs, even if the exact number of agents is given.
This is because some pair of agents may not be interactable
in an arbitrary graph, then they cannot resolve the conflicts of their identifiers by meeting each other.

Protocol $\prank$ detects the conflicts
between any (possibly non-interactable)
two agents by traversing $n$ tokens in a population where
each agent always has exactly one token.
This protocol is inspired by a self-stabilizing leader election protocol
with \emph{oracles} given by Beauquier \etal~\cite{twooracles},
where the agents traverse exactly one token in a population.

%Each of the $n$ tokens have a label
%in $\{0,1,\dots,n-1\}$. Every token makes the random walk, thus all the token eventually meet each other
The pseudocode of $\prank$ is shown in Algorithm \ref{al:prank}.
%Each agent $v$ has four variables
%$\idA, \idT \in \{0,1,\dots,n-1\}$, 
Our goal is to assign the agents the distinct labels $0,1,\dots,n-1$.
Each agent $v$ stores its label in a variable $v.\idA \in \{0,1,\dots,n-1\}$
and outputs it as it is.
To detect and resolve the conflicts of the labels
in arbitrary graphs, 
%In addition,
each agent
maintains four other variables $\idT \in \{0,1,\dots,n-1\}$,
$\colorA \in \{\shiro,\aka,\ao\}$,
$\colorT \in \{\aka, \ao\}$,
and $\timerT \in \{0,1,\dots,\tmax\}$,
where $\tmax$ is a sufficiently large $\Omega(mn)$ value
and $m$ is the number of interactable pairs in the population.
We will explain later how to assign $\tmax$ such a value.
We say that $v$ has a token labeled $x$ if $v.\idT=x$.
Each agent $v$ has one color, 
white ($\shiro$), red  ($\aka$), or blue ($\ao$),
while $v$'s token has one color,
red ($\aka$) or blue ($\ao$),
maintained by variables $v.\idA$ and $v.\idT$, respectively.
%We use $n$ \emph{tokens} to 
%detect and resolve the conflicts of the identifiers
%among the agents in arbitrary graphs.
%Each agent $v$ has variable $v.\idT \in \{0,1,\dots,n-1\}$
%and we say that $v$ has a token labeled $x$ if $v.\idT=x$.
%Each token always has one color, white ($\shiro$) or  ($\aka$).
%The color of the token on an agent $v$
%is maintained by a variable $v.\colorT \in \colorT$.
%Tokens always make the random walk, that is,
%while each token has one color,
%red ($\shiro$) or blue ($\ao$), represented by variable $\colorT$.

The tokens always make \emph{the random walk}:
two agents swap their tokens whenever two agents interact
(Line 1). 
If the two tokens have the same label,
one of them increments its label modulo $n$
(Line 2).
Since all tokens meet each other infinitely often
by the random walk, 
they eventually have mutually distinct labels ($\idT$),
after which they never change their labels.
Thereafter, the conflicts of labels among the agents
are resolved by using the tokens. 
Let $x$ be any integer in $\{0,1,\dots,n-1\}$
and denote the token labeled $x$ by $T_x$.
Ideally, an agent labeled $x$
always has the same color as that of $T_x$.
%To detect the conflict, whenever the token labeled $x$
%meet an agent labeld $x$, then the token changes
%the color of the agent, from red to blue or blue to red,
%unless the agent is white.
Consider the case that an agent labeled $x$, say $v$, meets $T_x$,
and $v$ and $T_x$ have different colors, blue and red.
Then, $v$ suspects that there is another agent labeled $x$,
and $v$ increases its label by one modulo $n$
(Line 7).
The agent $v$, now labeled $x+1 \pmod{n}$,
 %with a new label $x+1 \pmod{n}$
changes its color to white (Line 8).
When $v$ meets $T_{x+1 \pmod{n}}$ the next time,
it copies the color of the token to its color
to synchronize a color with $T_{x+1 \pmod{n}}$.
%of the agent and the token
%with the same label (Line 5).
Token $T_x$ changes its color periodically.
Specifically, $T_x$ decreases its $\timerT$
whenever it moves unless $\timerT$ already reaches zero
(Line 3).
%If an agent labeld $x$ meets $T_x$
If token $T_x$ meets an agent labeled $x$,
they have the same color, and the timer of the token is zero,
then they change their color from blue to red or from red to blue
(Lines 11--12).
If there are two or more agents labeled $x$,
this multiplicity is eventually detected
because $T_x$ makes a random walk forever:
$T_x$ eventually meets an agent labeled $x$ with a different color.
By repeating this procedure,
the population eventually reaches a configuration
where all the agents have distinct labels
and the agent labeled $x$ has the same color as that of $T_x$
for all $x = 0,1,\dots,n-1$.
No agent changes its label thereafter.

Note that this protocol works even if we do not use variable $\timerT$
and color $\shiro$.
We introduce them to make this protocol faster
under the uniformly random scheduler.
In the rest of this section, we prove the following theorem.

\begin{theorem}
\label{th:prank}
%Let $n \in \ndomain$ and $\mu$
%Let $\nu \in \ndomain$ and $\mu \in \mdomain$.
%If the knowledge of the exact number of agents is given,
%\ie $\nu=\{n\}$ for some integer $n$,
Given knowledge $\nu$ and $\mu$,
$\prank(\nu,\mu)$ is a self-stabilizing protocol
that solves $\ranking$ in arbitrary graphs
if $\nu = \{n\}$ for some integer $n$,
regardless of $\mu$.
%Provided that $\tmax = \Omega(mn)$,
Starting from any configuration $C_0$
on any population $G=(V_G,E_G) \in \graph_{n,*}$,
the execution of $\prank(\nu,\mu)$
under the uniformly random scheduler (\ie $\Xi_{\prank(\nu,\mu)}(G,C_0,\rs)$) reaches a safe configuration
within $O(mn^3d \log n + n^2 \tmax)$ steps in expectation,
where $m=|E_G|/2$ and $d$ is the diameter of $G$.
Each agent uses $O(\log n)$ bits of memory space
to execute $\prank(\nu,\mu)$.
\end{theorem}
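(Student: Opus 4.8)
The plan is to invoke Observation~\ref{obs:ss}: since the state set is finite, it suffices to establish (a) \emph{closure} and (b) a finite expected hitting time to the intended safe set under the uniformly random scheduler, the latter also giving probability-$1$ convergence and hence self-stabilization. I define the safe set $\safe$ to consist of the configurations in which (i) the $n$ tokens carry mutually distinct labels, i.e. $\{v.\idT\}_{v\in V_G}=\{0,\dots,n-1\}$; (ii) the $n$ values $v.\idA$ are mutually distinct (hence a ranking); and (iii) the colors are synchronized so that the mismatch branch never fires, i.e. for each $x$ the agent with $\idA=x$ has $\colorA$ consistent with the color $\colorT$ of the unique token $T_x$ having $\idT=x$. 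Closure I expect to be short: a token's color changes only on Lines~11--12, which fire only while the token is held by an agent whose label equals the token's label, and at that instant the agent's color is flipped identically; moreover an agent touches Lines~5--12 only for the token matching its own label (guard $a_i.\idA=a_i.\idT$). Thus in a configuration of $\safe$, whenever the agent ranked $x$ meets $T_x$ their colors agree, Lines~6--8 never fire, $\idA$ is never changed, and the joint flips preserve (iii); I would prove this by induction on interactions.

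For convergence I would split the analysis into two phases that compose because the second never touches $\idT$. \textbf{Phase A (token labels).} The random walk of Line~1 together with Line~2 drives the token labels to a permutation. The monovariant is the number of \emph{unoccupied} labels $e=|\{0,\dots,n-1\}\setminus\{v.\idT\}|$: a token is advanced only from a label currently shared by two tokens, so an occupied label never becomes empty, $e$ is non-increasing, and it reaches $0$ exactly when the labels are distinct, after which Line~2 is forever disabled. For the time bound, resolving a collision at label $x$ may merely push it to $x+1$, so I would view the process as independent \emph{cascades}: an excess token travels clockwise until it reaches an empty label, a walk of length at most $n$, each hop requiring a single meeting of a fixed pair of tokens, which by (the internal estimate in the proof of) Lemma~\ref{lem:meet_each_other} occurs within $O(mn^2d)$ steps in expectation, or within $O(mn^2d\log n)$ steps with high probability. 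The key observation is that all cascades proceed \emph{in parallel}, so the phase ends once the longest cascade (length $\le n$) completes; chaining $n$ hops and taking a union bound over the $O(n)$ cascades yields $O(mn^3d\log n)$ steps.

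\textbf{Phase B (agent labels).} Once the token labels are frozen, I would show the color/timer mechanism detects every duplicated rank, and here the choice $\tmax=\Theta(mn)$ is decisive because $\Theta(mn)$ upper-bounds the cover time of the token's walk measured in \emph{moves}. After $T_x$ performs a timer-triggered flip and resets $\timerT$ to $\tmax$ (Lines~10--12), Lemmas~\ref{lem:covertime} and~\ref{lem:drift} guarantee that with constant probability $T_x$ visits \emph{every} agent before $\timerT$ returns to $0$; any agent ranked $x$ other than the one at which the flip occurred still carries the old color, is caught by the mismatch branch, and advances its $\idA$ (Lines~6--8), vacating rank $x$. Hence each timer period is productive with constant probability, and the same cascade/monovariant bookkeeping as in Phase~A resolves the $O(n)$ duplicated ranks. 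One period costs $\tmax$ token-moves, i.e. $O(n\tmax+mnd)$ steps by Lemma~\ref{lem:drift}, and $O(n)$ productive periods suffice, giving $O(n^2\tmax+mn^2d)$; combined with Phase~A this yields the claimed $O(mn^3d\log n+n^2\tmax)$.

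The main obstacle, I expect, is \emph{soundness} of the detector when started from an arbitrary configuration: a correctly unique but mis-colored agent could fire the mismatch branch and wrongly vacate its rank, so the agent-side monovariant is not monotone a priori and one must exclude a non-terminating cycle of false detections. I would control this by strengthening the invariant to track, per rank, whether the agent's color is synchronized with its token, and by showing that a white agent resynchronizes at its next meeting with the matching token (Line~5), so that spurious advances are transient; global fairness then forces the system into $\safe$ with probability $1$, while the productive-period argument bounds the expected time. The space bound is then immediate: $\idA,\idT$ range over $\{0,\dots,n-1\}$, $\colorA,\colorT$ over constant-size sets, and $\timerT$ over $\{0,\dots,\tmax\}$ with $\tmax=\Theta(mn)=O(n^3)$, so every variable fits in $O(\log n)$ bits.
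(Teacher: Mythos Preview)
Your decomposition into a token-label phase and an agent-label phase mirrors the paper's, and your safe set is essentially the paper's $\srank$. There are, however, two real gaps.

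The more serious one is in Phase~B. You correctly identify the obstacle---a unique but mis-colored agent at rank $x$ can be spuriously pushed to $x+1$, emptying rank $x$---but ``strengthening the invariant'' and asserting that ``spurious advances are transient'' is not a proof, and in particular it does not rescue the monovariant you want to use. The paper resolves this by inserting an explicit intermediate phase: after token labels freeze ($\stoken$), the execution is first driven into the set $\ssync\subseteq\stoken$ in which, for every $x$, either $V_G(x)=\emptyset$ or some agent in $V_G(x)$ has $\colorA\in\{T_x.\colorT,\shiro\}$ (Lemma~\ref{lem:ssync}, cost $O(mn^3)$). This predicate $\qtoken(x)$ is forward-closed (Lemma~\ref{lem:qtoken}), and from it one obtains Lemma~\ref{lem:never_zero}: from any configuration in $\ssync$, an occupied rank can never become empty. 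Only with Lemma~\ref{lem:never_zero} in hand is the agent-side monovariant non-increasing, and only then does a productive-period count go through. Your plan never commits to reaching such a synchronized set \emph{before} running the monovariant argument, so you cannot exclude an unbounded ping-pong of ranks being filled and spuriously emptied, and no expected-time bound follows.

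The second gap is the ``parallel cascades'' argument you invoke in both phases. Cascades are not well-defined independent objects here: several excess tokens can share a label, a single meeting advances only one of them, and the pair that must meet next is not fixed, so ``each hop requiring a single meeting of a fixed pair'' and ``all cascades proceed in parallel'' are both unjustified. The paper avoids this entirely via the combinatorial Lemma~\ref{lem:game}: in the $+1\pmod n$ collision game there is always a label $z$ that no player ever enters from $z-1$, determined by the initial configuration alone. Hence, once all tokens have pairwise met ($O(mn^2 d\log n)$ steps by Lemma~\ref{lem:meet_each_other}), label $z$ carries exactly one token forever; iterating sequentially over $z+1,z+2,\dots$ gives the $O(mn^3 d\log n)$ bound for Phase~A (Lemma~\ref{lem:stoken}). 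The same Lemma~\ref{lem:game}, applied to agent labels from inside $\ssync$ (where Lemma~\ref{lem:never_zero} guarantees the ``no player enters $z$'' hypothesis is stable), drives Lemma~\ref{lem:srank}. Your plan has no substitute for Lemma~\ref{lem:game}, and the cascade heuristic does not supply one.
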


Recall that we require parameter $\tmax$ to be a sufficiently large
$\Omega(mn)$ value.
If an upper bound $M$ of $m$ such that $M=\Theta(m)$ is obtained from knowledge $\mu$,
we can substitute a sufficiently large $\Theta(mn)$ value for $\tmax$.
Then, $\prank(\nu,\mu)$ converges in $O(mn^3d \log n)$ steps in expectation.
Even if such $M$ is not obtained from $\mu$, \eg $\mu = \mdomain$,
we can substitute a sufficiently large $\Theta(n^3)$ value for $\tmax$.
Then, $\prank(\nu,\mu)$ converges in $O(mn^3d \log n+ n^5)$ steps in expectation.

In the rest of this section,
we fix a population $G=(V_G,E_G) \in \graph_{n,*}$,
let $m=|E_G|$/2, and let $d$ be the diameter of $G$.
To prove Theorem \ref{th:prank},
we define three sets $\stoken$, $\ssync$, and $\srank$
of configurations in $\call(\prank(\nu,\mu),G)$ as follows.

\begin{itemize}
 \item $\stoken$:
the set of all the configurations
in $\call(\prank(\nu,\mu),G)$
where all tokens have distinct labels,
\ie $\forall u,v \in V_G:u.\idT \neq v.\idT$.
In a configuration in $\stoken$,
there exists exactly one token labeled $x$ in the population
for each $x \in \{0,1,\dots,n-1\}$.
We use notation $T_x$ both to denote the unique token labeled by $x$
and to denote the agent on which
this token \emph{currently} stays.
 \item $\ssync$: the set of all the configurations in $\stoken$
where %for any $i \in \{0,1,\dots,n-1\}$,
%proposition  $\qtoken(x) \overset{\text{def}}{\equiv}(\forall v \in V_G: v.\idA \neq x) \lor (\exists u \in V_G: u.\idA=x \land (u.\colorA = T_x.\colorT \lor u.\colorA = \shiro))$
%holds for any $x \in \{0,1,\dots,n-1\}$.
proposition  $\qtoken(x) \overset{\text{def}}{\equiv} V_G(x) \neq \emptyset \Rightarrow (\exists u \in V_G(x): u.\colorA = T_x.\colorT \lor u.\colorA = \shiro)$
holds for any $x \in \{0,1,\dots,n-1\}$,
where $V_G(x) \overset{\text{def}}{=} \{v \in V \mid v.\idA=x\}$.

 \item $\srank$: the set of all the configurations in $\ssync$
where all the agents in $V_G$ have distinct labels,
that is, $\forall u,v \in V_G: u.\idA \neq v.\idA$.
\end{itemize}
%To analyze the number of steps required to reach a safe configuration,
%we first give the following useful lemma.
%(See Appendix for the proof of the lemma)

\begin{lemma}
\label{lem:stoken_closed}
The set $\stoken$ is closed for $\prank(\nu,\mu)$.
%that is, no execution of $\prank(\nu,\mu)$
%starting from a configuration in $\stoken$
%reaches a configuration out of $\stoken$.
\end{lemma}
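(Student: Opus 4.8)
The plan is to track the only piece of state that determines membership in $\stoken$, namely the token label $\idT$, and to show that a single interaction can never turn distinct labels into coinciding ones. First I would identify every line of Algorithm~\ref{al:prank} that writes to $\idT$: this happens only at Line~1 (the swap) and Line~2 (the conditional increment). Lines~3--14 modify $\timerT$, $\colorA$, $\colorT$, and $\idA$, but never touch $\idT$, so they are irrelevant to the closure of $\stoken$ and can be ignored.

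Next I would analyze these two relevant lines for an arbitrary interaction $(a_0,a_1)$ starting from a configuration $C \in \stoken$. The swap at Line~1 merely exchanges the two tokens between $a_0$ and $a_1$; it permutes the assignment of labels to agents but leaves the multiset $\{v.\idT \mid v \in V_G\}$ unchanged, and in particular preserves distinctness. The crucial observation is that Line~2 is evaluated \emph{after} the swap, so its guard $a_0.\idT = a_1.\idT$ tests the post-swap labels. Since $C \in \stoken$ guarantees $a_0.\idT \neq a_1.\idT$ before the interaction, the swapped values are still distinct, the guard fails, and no increment is performed.

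Putting this together, the resulting configuration $C'$ carries exactly the same multiset of token labels as $C$ (only redistributed among the agents), so all labels remain mutually distinct and hence $C' \in \stoken$. Because this holds for every interactable pair $(a_0,a_1) \in E_G$ and every $C \in \stoken$, no configuration outside $\stoken$ is reachable from a configuration in $\stoken$, which is precisely the definition of closedness. I do not expect a genuine obstacle here; the only point that requires care is the ordering of Lines~1 and~2, i.e.\ confirming that the equality test of Line~2 sees the labels after they have been swapped, which is exactly what makes that guard vacuously false on configurations in $\stoken$.
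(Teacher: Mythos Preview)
Your proposal is correct and follows the same idea as the paper's proof: the paper simply observes that a token changes its label only when it meets another token with the same label, so from a configuration in $\stoken$ no label ever changes. Your line-by-line analysis of the swap at Line~1 and the guard at Line~2 is a more detailed unpacking of exactly this observation.
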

\begin{proof}
A token changes its label only if it meets another token
with the same label. Hence, no token changes its label
in an execution starting from a configuration in $\stoken$.
\end{proof}

\begin{lemma}
\label{lem:qtoken}
Let $x \in \{0,1,\dots,n-1\}$.
In an execution of $\prank(\nu,\mu)$
starting from a configuration in $\stoken$, once $\qtoken(x)$ holds, it always holds thereafter.
\end{lemma}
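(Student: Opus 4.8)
The plan is to prove closure of the predicate $\qtoken(x)$ by a one-step invariance argument. Since we start in $\stoken$, which is closed by Lemma~\ref{lem:stoken_closed}, there is throughout the execution a unique token labelled $x$, written $T_x$; in particular Line~2 never fires, so $T_x$ and its colour $T_x.\colorT$ are well defined at every step. It therefore suffices to fix one transition from a configuration $C$ to $C'$ caused by an interaction $(a_0,a_1)$, assume $\qtoken(x)$ holds in $C$, and show it holds in $C'$. I will call an agent $u$ an \emph{$x$-witness} (in a given configuration) if $u.\idA=x$ and $u.\colorA\in\{T_x.\colorT,\shiro\}$; then $\qtoken(x)$ is precisely the statement ``$V_G(x)=\emptyset$ or an $x$-witness exists''. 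So the goal reduces to exhibiting an $x$-witness in $C'$ whenever $V_G(x)\neq\emptyset$ in $C'$.

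First I would record two observations. \emph{Observation 1}: the colour $T_x.\colorT$ can change only through Lines~11--12, which are reached only for the agent $a_i$ holding $T_x$ after the swap and satisfying $a_i.\idA=a_i.\idT=x$; since those lines assign a common value to $a_i.\colorA$ and to $a_i.\colorT=T_x.\colorT$ and never touch $a_i.\idA$, this agent $a_i$ is an $x$-witness in $C'$. \emph{Observation 2}: an agent acquires the label $x$ only through Line~7, which is immediately followed by Line~8 setting its colour to $\shiro$; hence any agent that \emph{enters} $V_G(x)$ during the step is white, and thus an $x$-witness in $C'$.

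With these in hand, the case $V_G(x)=\emptyset$ in $C$ is immediate: either $V_G(x)$ is still empty in $C'$ (and $\qtoken(x)$ holds vacuously) or some agent entered $V_G(x)$ and is a witness by Observation~2. Otherwise $\qtoken(x)$ gives an $x$-witness $w$ in $C$, and I would trace $w$. Its label and colour change only if $w\in\{a_0,a_1\}$ and $w$ is processed by the \textbf{for all} loop, that is, $w.\idT=x$ after the swap, so $w$ holds $T_x$. If $w$ does not hold $T_x$, then $w.\idA$ and $w.\colorA$ are unchanged; if moreover $T_x.\colorT$ is unchanged then $w$ is still a witness, while if $T_x.\colorT$ changed then Observation~1 supplies one. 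If $w$ holds $T_x$ after the swap, then $w.\colorT=T_x.\colorT$, so the witness condition together with Line~5 (a white agent copies its token's colour) forces $w.\colorA=w.\colorT$ after Line~5; hence Line~6 is false, Line~7 is skipped (so $w.\idA=x$ survives), and even if Lines~11--12 fire they keep $w.\colorA=w.\colorT=T_x.\colorT$. In every case $w$, or the witness from Observation~1, is an $x$-witness in $C'$.

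The main obstacle I anticipate is keeping an agent's colour synchronized with its token's colour across the whole $\colorA$-update block, where Line~5, the branch at Lines~6--8, and the joint flip at Lines~11--12 interact. The clean way around it is to argue only through the invariant ``after the block, $w.\colorA$ equals $w.\colorT=T_x.\colorT$'', which holds because every assignment in Lines~11--12 updates $w.\colorA$ and $w.\colorT$ in lockstep; this makes the argument insensitive to the exact colour produced by the flip. The other point that must be handled carefully, and which is exactly what $\stoken$-closure buys us, is the well-definedness of $T_x$ and of $T_x.\colorT$ throughout the step, so that the witness predicate is evaluated against a single, unambiguous token.
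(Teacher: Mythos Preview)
Your proof is correct and follows essentially the same approach as the paper's: the paper's clauses (i), (ii), (iii) are precisely your Observation~2 and your two subcases on whether the witness $w$ holds $T_x$ after the swap. Your version is considerably more detailed than the paper's three-line sketch, but the underlying invariance argument is the same.
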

\begin{proof}
This lemma holds because (i) an agent must be white just after
it changes its label from $x-1 \pmod{n}$ to $x$,
(ii)
a white agent labeled $x$ changes its color only when
token $T_x$ visits it at an interaction,
at which this white agent gets the same color as that of $T_x$,
%$T_x.\idA=x$ holds, that is, 
%the token labeled $x$ stays on an agent labeled $x$
%just after an interaction where a white agent labeled $x$
%changes its color to red or blue,
%and
(iii)
an agent labeled $x$ with the same color as that of $T_x$
changes its color only when token $T_x$ visits it at an interaction,
at which this agent and $T_x$ get the same new color.
%$T_x.\idA=x$ holds just after
%an interaction where an agent labeled $x$
%with the same color as $T_x$ changes its color. 
\end{proof}

\begin{lemma}
\label{lem:ssync_closed}
The set $\ssync$ is closed for $\prank(\nu,\mu)$.
%that is, no execution of $\prank(\nu,\mu)$ starting from a configuration in $\ssync$
%reaches a configuration out of $\ssync$.
\end{lemma}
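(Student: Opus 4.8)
The plan is to prove closure in the one-step form and then invoke the induction that underlies the definition of closedness. Since reachability is the transitive closure of the single-step relation $\stackrel{\prank(\nu,\mu),G}{\to}$, it suffices to fix an arbitrary configuration $C \in \ssync$ together with an arbitrary transition $C \stackrel{\prank(\nu,\mu),G}{\to} C'$ and to show $C' \in \ssync$; closedness in the sense of the definition then follows by induction on the length of a reaching sequence.

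First I would dispose of the token-distinctness requirement. By definition $\ssync \subseteq \stoken$, so $C \in \stoken$, and Lemma \ref{lem:stoken_closed} tells us that $\stoken$ is closed; hence $C' \in \stoken$ as well. In particular all tokens still carry mutually distinct labels in $C'$, so for every $x \in \{0,1,\dots,n-1\}$ the symbol $T_x$ remains well defined in $C'$ and names the unique token (and the agent carrying it) labeled $x$. It then remains only to verify the color predicate $\qtoken(x)$ for every $x$ in $C'$. Because $C \in \ssync$, the predicate $\qtoken(x)$ holds in $C$ for every $x$. Lemma \ref{lem:qtoken} asserts that, along any execution started from a configuration in $\stoken$, once $\qtoken(x)$ holds it holds thereafter; applying it with the $\stoken$-configuration $C$ as the starting point yields that $\qtoken(x)$ still holds in $C'$. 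As this is true for every $x \in \{0,1,\dots,n-1\}$, the configuration $C'$ satisfies the defining conditions of $\ssync$, which completes the argument.

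I do not expect a genuine obstacle in this lemma: it is essentially the conjunction of the two preceding lemmas, with Lemma \ref{lem:stoken_closed} supplying the token part and Lemma \ref{lem:qtoken} supplying the color part. The substantive work has already been absorbed into Lemma \ref{lem:qtoken}, whose proof handled the delicate case analysis showing that each of the three ways a relevant color can change preserves the existential witness demanded by $\qtoken(x)$ (an agent freshly relabeled to $x$ turning white ($\shiro$); a white agent labeled $x$ copying $T_x.\colorT$; and a synchronized agent flipping its color jointly with $T_x$). Were that lemma unavailable, the only difficulty here would be to reconstruct that case analysis directly over the update rules in Lines 4--12 of Algorithm \ref{al:prank}, checking interaction-by-interaction that no step can destroy the witness agent guaranteed by $\qtoken(x)$ without simultaneously creating a new one.
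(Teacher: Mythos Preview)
Your proposal is correct and follows essentially the same approach as the paper: the paper's proof is the single sentence ``The lemma immediately follows from Lemma~\ref{lem:qtoken},'' and your argument is exactly this (with the closure of $\stoken$ from Lemma~\ref{lem:stoken_closed} made explicit), just written out in more detail.
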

\begin{proof}
The lemma immediately follows from Lemma \ref{lem:qtoken}.
\end{proof}

\begin{lemma}
\label{lem:never_zero}
Let $x \in \{0,1,\dots,n-1\}$.
In an execution of $\prank(\nu,\mu)$
starting from a configuration in $\ssync$,
once at least one agent is labeled $x$, 
the number of agents labeled $x$
never becomes zero thereafter.
%the number of agents labled $x$
%never changes from $1$ to $0$.
\end{lemma}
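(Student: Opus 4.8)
The plan is to reduce the claim to a single critical configuration type: one in which exactly one agent carries the label $x$. I will argue that the count $|V_G(x)|$ can reach $0$ only by dropping from $1$ to $0$ in a single interaction, and then show that such a drop is impossible under the invariant $\qtoken(x)$, which is available throughout the execution since $\ssync$ is closed (Lemma \ref{lem:ssync_closed}).

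First I would pin down exactly how an agent can abandon the label $x$. Because $\ssync \subseteq \stoken$ and $\stoken$ is closed (Lemma \ref{lem:stoken_closed}), all token labels stay distinct, so there is a unique token $T_x$. Inspecting the pseudocode, an agent $a_i$ changes $a_i.\idA$ (Line 7) only inside the loop body guarded by $a_i.\idA = a_i.\idT$ (Line 4), and only when the test on Line 6 succeeds; moreover, after the token swap of Line 1, $a_i.\idT = x$ means precisely that $a_i$ now carries $T_x$. Since the swap hands $T_x$ to exactly one of the two interacting agents, at most one agent can drop the label $x$ per step, so $|V_G(x)|$ decreases by at most one per interaction. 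Consequently it suffices to rule out the transition from $|V_G(x)| = 1$ to $|V_G(x)| = 0$.

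The heart of the argument is the case $|V_G(x)| = 1$. Here the unique agent labeled $x$, call it $u$, is forced to be the witness of $\qtoken(x)$, so in the pre-interaction configuration $u.\colorA = T_x.\colorT$ or $u.\colorA = \shiro$. For $u$ to lose its label it must \emph{acquire} $T_x$ in the interaction, so after the swap $u.\colorT$ equals the pre-interaction color of $T_x$, i.e.\ $T_x.\colorT$. In the matched case $u.\colorA = T_x.\colorT = u.\colorT$, so Line 6 fails and the label is preserved; in the white case, Line 5 first recolors $u.\colorA \gets u.\colorT$, after which Line 6 again fails. Either way $u$ keeps the label $x$, so the count cannot fall to $0$, which finishes the proof.

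The main obstacle I anticipate is the bookkeeping of colors across the token swap: $\qtoken(x)$ constrains $u.\colorA$ relative to $T_x.\colorT$ in the configuration \emph{before} the interaction, whereas the Line 6 test compares $u.\colorA$ with the swapped-in value $u.\colorT$. I must verify carefully that, for the agent that receives $T_x$, these two quantities coincide (noting that Line 2 does not fire in $\stoken$ and Line 3 touches only $\timerT$), and that the white exception is neutralized by the Line 5 assignment that precedes the Line 6 test. I should also confirm that the token-color flips on Lines 11--12 lie in the else branch reached only when Line 6 fails, so they cannot trigger a label change, and that $u.\colorA$ is never altered by the swap of Line 1, which touches only the token fields $\idT$, $\colorT$, and $\timerT$.
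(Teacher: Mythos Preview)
Your argument is correct and follows essentially the same idea as the paper: the paper's proof is simply the one-liner ``This lemma holds in the same way as the proof of Lemma~\ref{lem:qtoken},'' i.e., it relies on the $\qtoken(x)$ invariant (preserved throughout any execution from $\ssync$) to conclude that the witnessing agent cannot trigger the color-mismatch test at Line~6. You make this explicit by reducing to the $|V_G(x)|=1$ case via the at-most-one-drop observation and then checking the two subcases of $\qtoken(x)$ against Lines~5--6, which is exactly the mechanism underlying the paper's reference to Lemma~\ref{lem:qtoken}.
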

\begin{proof}
This lemma holds in the same way
as the proof of Lemma \ref{lem:qtoken}.
\end{proof}

\begin{lemma}
\label{lem:srank_closed}
The set $\srank$ is closed for $\prank(\nu,\mu)$.
%that is, no execution of $\prank(\nu,\mu)$
%starting from a configuration in $\srank$
%reaches a configuration out of $\srank$.
\end{lemma}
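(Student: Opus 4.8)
The plan is to build directly on the closure of $\ssync$ (Lemma \ref{lem:ssync_closed}) and thereby reduce the whole statement to a single claim about labels. Since $\srank \subseteq \ssync$, every configuration reachable from a configuration in $\srank$ already lies in $\ssync$, hence in $\stoken$, so all tokens keep distinct labels and the notation $T_x$ stays well-defined throughout. A configuration in $\ssync$ lies in $\srank$ exactly when the agents carry distinct labels; and because the population has $n$ agents and exactly $n$ admissible labels $\{0,1,\dots,n-1\}$, this distinctness is equivalent to every class $V_G(x)$ being a singleton. Consequently it suffices to prove that, starting from any $C \in \srank$, no interaction ever changes an agent's label $\idA$.

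First I would inspect the transition function to locate every place where $\idA$ can change. The only such place is Line 7, and for agent $a_i$ it fires only when (i) after the token swap of Line 1 we have $a_i.\idA = a_i.\idT$, i.e. $a_i$ holds the token whose label equals its own, and (ii) the Line 6 guard $a_i.\colorA \neq a_i.\colorT$ evaluates to true. Crucially, Line 5 is executed before Line 6 is tested, so if $a_i$ is white it first copies the token color ($a_i.\colorA \gets a_i.\colorT$), after which the Line 6 guard is necessarily false. Thus an agent can increment its label only if it is non-white and its color differs from that of the token it currently holds.

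Now comes the key step, which I expect to be the only nontrivial point. Fix any agent $a_i$ labeled $x$ that, after the swap, satisfies $a_i.\idA = a_i.\idT = x$. In $\stoken$ there is a unique token labeled $x$, so this token is $T_x$ and $a_i.\colorT = T_x.\colorT$. Since $C \in \ssync$, the predicate $\qtoken(x)$ holds, and since $C \in \srank$ the class $V_G(x)$ is the singleton $\{a_i\}$; therefore $\qtoken(x)$ forces $a_i.\colorA = T_x.\colorT$ or $a_i.\colorA = \shiro$. In the first case the Line 6 guard is already false; in the second case Line 5 makes it false. Either way Line 7 does not execute, so $a_i$ does not change its label. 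As this reasoning applies to both $i \in \{0,1\}$ independently (the loop body for index $i$ touches only $a_i$ and its own token), no label changes during the interaction.

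Combining the two facts — closure of $\ssync$ and preservation of distinct labels — every configuration reachable from $\srank$ again lies in $\srank$, which is exactly the closure claim. The main obstacle is the bookkeeping in the key step: one must verify that the token color referenced by $\qtoken$ is precisely the color $a_i.\colorT$ seen after the Line 1 swap, and that the white case is genuinely disposed of by Line 5 before the Line 6 guard is read. Once these two points are pinned down, the argument is immediate.
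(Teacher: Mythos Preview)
Your argument is correct. The transition analysis is sound: in $\stoken$ Line~2 is vacuous, the swap makes $a_i.\colorT$ equal to the color that $T_x$ carried in $C$, $\qtoken(x)$ together with $|V_G(x)|=1$ forces $a_i.\colorA\in\{T_x.\colorT,\shiro\}$, and Line~5 disposes of the white case before the Line~6 guard is read; hence Line~7 never fires and no $\idA$ changes.

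The paper reaches the same conclusion by a shorter, slightly more indirect route: it invokes Lemma~\ref{lem:ssync_closed} to stay inside $\ssync$ and Lemma~\ref{lem:never_zero} to guarantee that $|V_G(x)|\ge 1$ for every $x$; pigeonhole ($n$ agents, $n$ labels) then forces each $|V_G(x)|$ to remain exactly one. Your approach unpacks the very mechanism that underlies Lemma~\ref{lem:never_zero} (the $\qtoken$ witness cannot leave its label) and applies it directly in the singleton case, bypassing the counting step. The paper's version is terser and reuses a lemma it needs elsewhere; your version is self-contained and makes the local reason why no label change can occur fully explicit. Both are valid and rest on the same underlying invariant.
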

\begin{proof}
The lemma immediately follows from
Lemmas \ref{lem:ssync_closed} and \ref{lem:never_zero}.
\end{proof}

The following lemma is useful to analyze
the expected number of steps
required to reach a configuration in $\srank$
in an execution of $\prank(\nu,\mu)$.
%See Appendix \ref{sec:game} for the proof of this lemma.
\begin{lemma}
\label{lem:game} 
Consider the following game with $n$ players
$p_0, p_1, \dots, p_{n-1}$.
Each player always has one state in $\{0,1,\dots,n-1\}$.
At each step, an arbitrary pair of players is selected 
and they check the states of each other.
If they have the same state, one of them increases its state
by one modulo $n$. Otherwise, they do not change their states.
Starting this game from any configuration (\ie any combination of the states of all players),
there is at least one state $z \in \{0,1,\dots,n-1\}$ such that 
no player changes its state from $z-1 \pmod{n}$ to $z$.
The set of such states is uniquely determined by a configuration
from which the game starts.
%Such $z$  depends only on the (initial) configuration,
%In this game, we say that \emph{one round} passes
%every time all players check their states each other at least once.
%The $n$ players must have distinct states after $n$ round passes,
%starting from any configuration (\ie any combination of the states of all players).
\end{lemma}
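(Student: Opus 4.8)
The plan is to track a conserved ``flow'' on the cycle of states $\mathbb{Z}_n$ and to identify the blocked states as the minimizers of an explicit prefix functional of the initial configuration. Write $\mathrm{occ}_t(z)$ for the number of players in state $z$ after $t$ steps, and let $c_z(t)$ count how many times the transition $z-1\to z \pmod n$ has been performed during the first $t$ steps. Since a player can enter state $z$ only across this transition and can leave only across $z\to z+1$, we have the exact conservation identity $\mathrm{occ}_t(z)=\mathrm{occ}_0(z)+c_z(t)-c_{z+1}(t)$ (indices mod $n$) at every time and for every state. I would then define the prefix-excess functional $G(z)=\sum_{j=0}^{z-1}(\mathrm{occ}_0(j)-1)$, which is periodic modulo $n$ because the total excess around the cycle is $0$, and set $B=\{z : G(z)=\min_w G(w)\}$. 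This set is nonempty and depends only on the initial configuration; the claim is that $B$ is exactly the set of states that are never entered.

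First I would prove the crucial direction: every $z^{\ast}\in B$ is never entered, in any execution. The key observation is that minimality of $G$ at $z^{\ast}$ is a ballot/Dyck-type condition: reading players clockwise from $z^{\ast}$, every initial arc contains at least as many players as states. Concretely, writing $S_t(z)=\sum_{j=0}^{z-1}\mathrm{occ}_t(z^{\ast}+j)$ for the occupancy of the clockwise arc of length $z$ starting at $z^{\ast}$, the telescoping identity $\sum_{j=0}^{z-1}(\mathrm{occ}_0(z^{\ast}+j)-1)=G(z^{\ast}+z)-G(z^{\ast})\ge 0$ gives $S_0(z)\ge z$ for all $z$. I would show that $S_t(z)\ge z$ is an invariant preserved by every move, and that it forces state $z^{\ast}-1$ to hold at most one player at all times, so the transition $z^{\ast}-1\to z^{\ast}$ can never fire.

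The invariant is maintained by a short case analysis on the single move $s\to s+1$ of a step. Assuming inductively that the edge into $z^{\ast}$ has not yet fired, no player ever crosses from $z^{\ast}-1$ into $z^{\ast}$; since players move clockwise, an arc $S_t(\cdot)$ could gain a player only by such a crossing (excluded), so the only arc that changes is the one ending exactly at the firing source $s$, which loses one player. For that arc, of length $z$ say, the firing requires $\mathrm{occ}_t(s)\ge 2$, and as $s$ is its last state we get $S_t(z)=S_t(z-1)+\mathrm{occ}_t(s)\ge(z-1)+2=z+1$ by the induction hypothesis; hence $S_{t+1}(z)=S_t(z)-1\ge z$, preserving the invariant. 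Taking the arc of length $n-1$, which omits exactly $z^{\ast}-1$, yields $\mathrm{occ}_t(z^{\ast}-1)=n-S_t(n-1)\le 1$, so $z^{\ast}-1$ never attains the two players needed to fire into $z^{\ast}$; this closes the induction and shows $z^{\ast}$ is blocked.

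Finally I would argue tightness: a state $z\notin B$ is not blocked, which together with the above shows that the blocked set equals $B$ and is therefore uniquely determined by the starting configuration. For such $z$ we have $G(z)>\min_w G(w)$, so some clockwise arc ending at $z-1$ carries a strict surplus of players over states; I would exhibit a schedule that pushes this surplus forward until two players coincide at $z-1$ and fire into $z$ (under global fairness, a repeatedly available collision is eventually performed, so in every fair run exactly the states outside $B$ are entered). I expect the main obstacle to be isolating and verifying the correct invariant in the forward direction, in particular checking that minimality of $G$ is exactly the arc-surplus condition and that no move other than the excluded wrap across $z^{\ast}$ can raise a prefix arc; the conservation identity and the tightness construction should be routine by comparison.
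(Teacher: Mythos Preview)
Your proposal is correct, and it pins down the same blocked set as the paper: your condition $G(z)=\min_w G(w)$ is equivalent to the paper's criterion $\sum_{j=1}^{i} k_{z-j}\le i$ for all $i\in\{1,\dots,n-1\}$ (with $k_j=\mathrm{occ}_0(j)$), since $G(z)-G(z-i)=\sum_{j=1}^{i}k_{z-j}-i$. The routes differ, however. The paper simply asserts that this characterization is ``trivial'' and then argues nonemptiness within a fixed execution by tracking which state is the last to become occupied: once every state is filled the players are pairwise distinct and no further move occurs, so the edge out of that last-filled state never fires. You instead prove the characterization directly, via the arc-occupancy invariant $S_t(z)\ge z$ together with a short case analysis on how a single firing affects each prefix arc; existence is then free because the set of minimizers of $G$ is always nonempty. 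This buys you two things. First, the invariant rigorously establishes the direction the paper declares trivial (that every minimizer of $G$ is blocked in every execution); this is essentially a cycle-lemma/ballot argument and is not entirely obvious without something like your case analysis. Second, your existence is uniform across executions, whereas the paper's ``last filled'' state is execution-dependent and one must still appeal to the characterization to conclude that some fixed state is blocked in all executions. The tightness direction you sketch (every $z\notin B$ can be entered in some run) is slightly more than the lemma needs for its downstream uses, but it is natural and makes the identification of $B$ with the blocked set exact.
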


\begin{proof}
Fix an initial configuration $\psi_0=(k_0,k_1,\dots,k_{n-1})$,
where $k_i$ represents the number of agents in state $i$
in the configuration.
In this proof, we make every addition and subtraction
in modulo $n$ and omit the notation ``$\pmod{n}$''.
%Consider that the game begins from $\psi_0$.
It is trivial that for any $x \in \{0,1,\dots,n-1\}$,
no player changes its state from $x-1$
to $x$ 
if and only if $x$ satisfies %the following predicate: 
%\begin{align*}
$\sum_{j=1}^{i} k_{x-j} \le i$
for all $i \in \{1,2,\dots,n-1\}$.
%\end{align*}
Therefore, the set of states $z$ such that 
no player changes its state from $z-1$ to $z$
is uniquely determined by the initial configuration $\psi_0$.

By the uniqueness of the above set,
it suffices to show that
for any execution $\Xi$ of this game starting from $\psi_0$,
there is a state $z \in \{0,1,\dots,n-1\}$
such that no player changes its state from $z-1$ to $z$
in $\Xi$.
We say that a state $x \in \{0,1,\dots,n-1\}$ is \emph{filled}
if at least one player is in state $x$.
By definition of this game, once $x$ is filled, 
$x$ is always filled thereafter. 
%Consider an arbitrary execution $\Xi$ of this game
%which starts from $\psi_0$.
If there is a state $z$ that is never filled in $\Xi$,
%the lemma immediately holds because
no player changes its state from $z-1$ to $z$.
%Suppose that all states are eventually filled in $\Xi$,
Suppose the other case and
%Otherwise,
let $z$ be the state that is filled for the last time in execution $\Xi$.
By definition,
%all the other $n-1$ states were already filled
%before $z$ is filled.
%Therefore,
when $z$ gets filled, all the $n$ states are filled, 
which yields that all
the $n$ players have mutually distinct states at this time.
Therefore, no player never changes its state from $z-1$ to $z$
in execution $\Xi$.

%satisfies $\lnot \qreach(z)$.

%Define $S(\gamma) \subseteq \{0,1,\dots,n-1\}$
%as the set of states $z \in \{0,1,\dots,n-1\}$ such that
%in every execution of this game
%starting from configuration $\gamma_0$,
%at least one player chagnes its state from $z-1 \pmod{n}$ to $z$.
%First, we prove that the set of such states is
%uniquely determined by $\gamma_0$, that is,
%in any execution starting from $\gamma_0$,
%at least one player changes its state from $z-1 \pmod{n}$ to $z$
%if and only if $z \in S(\gamma)$.
\end{proof}

%\begin{proof}
%We say that a state $x \in \{0,1,\dots,n-1\}$ is \emph{filled}
%if at least one player is in state $x$.
%It is tirivial that once a state $x$ is filled, 
%$x$ is always fiiled thereafter. 
%Consider an arbitrary execution of this game
%and let $z$ be the state that is filled for the last time in the execution.
%When $z$ is the , all the $n$ states are filled, 
%\end{proof}

%\begin{proof}
%Fix a popoulation $G=(V_G,E_G) \in \graph_{n,*}$.
%Let $m=|E_G|$ and let $d$ be the diameter of $G$.
%Let $C_0$ be an arbtirary configuration in
%$\call(P(\nu,\mu),G)$
%and let $\Xi=\Xi_{P(\nu,\mu)(G,C_0,\rs)}$.
%We define $\stoken$ as the set of all configurations
%in $\call(\prank(\nu,\mu),G)$
%where all tokens have distinct labels,
%\ie $\forall u,v \in V_G:u.\idT \neq v.\idT$.
%First, we analyze the number of steps required to
%reach a configuration in $\stoken$.
%Sudo \etal~\cite{Sud+20B} proved that
%any two tokens making random walks meet
%within $O(mnd)$ steps in expectation,
%which yields that
%they meet within $O(mnd \log n)$ steps \wvhp.
%Therefore, by the union bound,
%all tokens meet each other
%within $O(mnd \log n)$ steps in expectation. 
%\end{proof}
\begin{lemma}
\label{lem:stoken}
Starting from any configuration $C_0 \in \call(\prank(\nu,\mu),G)$,
an execution of $\prank(\nu,\mu)$ under the uniformly
random scheduler (\ie $\Xi_{P(\nu,\mu)}(G,C_0,\rs)$)
reaches a configuration
%an execution of $\prank(\nu,\mu)$ reaches a configuration
in $\stoken$ within $O(mn^3 d \log n)$ steps in expectation.
\end{lemma}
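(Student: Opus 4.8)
The plan is to reduce the evolution of the token labels to the abstract game of Lemma~\ref{lem:game} and then bound the convergence time of that game under the random walk. First I would observe that, since every agent carries exactly one token and the label field $\idT$ is changed only in Line~2 (when two equal-labelled tokens meet, one increments modulo $n$), the multiset of labels evolves \emph{exactly} as the states of the $n$ players in Lemma~\ref{lem:game}, where ``two players meet'' is realised by ``the two agents currently holding those tokens interact.'' Reaching $\stoken$ is precisely reaching the all-distinct configuration of the game: with $n$ tokens and $n$ possible labels, all labels being distinct is equivalent to every label being present exactly once. Lemma~\ref{lem:meet_each_other} guarantees every pair of tokens meets infinitely often, so every applicable increment eventually fires; the work is to control the \emph{expected} time.

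Next I would extract the monotone structure promised by Lemma~\ref{lem:game}. Fix a label $z$ that no token ever enters from $z-1$ during the execution, and cut the cyclic order of labels at $z$, relabelling the positions of the resulting line as $0=z,1,\dots,n-1=z-1$. Along this line a token can only move rightward (a collision sends a token from position $p$ to $p+1$, and no token ever crosses from $n-1$ into $0=z$). Writing $P_i$ for the number of tokens in positions $\{0,\dots,i\}$, each crossing at $p$ decreases exactly $P_p$ by one and no $P_i$ ever increases; moreover the defining condition of $z$ is exactly $P_i\ge i+1$ for all $i$. Hence each $P_i$ is non-increasing with floor value $i+1$, the all-distinct configuration is the unique one with $P_i=i+1$ for all $i$, and the total number of label-changing collisions before convergence is at most $\sum_i\bigl(P_i-(i+1)\bigr)\le\binom{n}{2}$. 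This also shows the process settles monotonically from the right: once a suffix holds exactly one token per position it never changes again.

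To turn this combinatorial bound into an $O(mn^3 d\log n)$ time bound I would partition the execution into phases of $\Theta(mn^2 d\log n)$ steps. By Lemma~\ref{lem:meet_each_other}, within a single phase every pair of tokens meets (in expectation, and with probability at least $1/2$ for a suitable constant), so every collision enabled during the phase fires. I would then organise the crossings by their dependency depth along the line: a crossing into position $p+1$ can occur only after position $p$ holds two tokens, so the crossings needed for convergence form a partial order of depth at most $n$. Arguing by induction on $\ell$, I would show that after $\ell$ phases every crossing of depth at most $\ell$ has fired, i.e.\ the configuration already agrees with the final all-distinct configuration on all levels up to $\ell$. Since the depth is at most $n$, $O(n)$ phases suffice; multiplying by the $O(mn^2 d\log n)$ length of a phase gives the claimed $O(mn^3 d\log n)$ expected bound, and a standard restart argument (re-invoking Lemma~\ref{lem:meet_each_other} after a phase in which some pair fails to meet) absorbs the constant-probability failures.

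The main obstacle is precisely this per-phase induction step. A phase in which every pair meets \emph{exactly once} need not advance a full level, because resolving a conflict may require several collisions to fire in a fixed order: an excess sitting far to the left of an empty position must propagate rightward one position at a time before it can fill that position. The way out is that a phase of length $\Theta(mn^2 d\log n)$ is $\Theta(\log n)$ times the expected meeting time of a fixed pair given by Lemma~\ref{lem:meet_each_other}, so within one phase a given co-located pair meets $\Omega(\log n)$ times with high probability; combined with the depth-$\le n$ structure from Lemma~\ref{lem:game}, this lets each depth-$(\ell+1)$ crossing fire once its depth-$\le\ell$ predecessors have, so that the ``level'' potential increases by at least one per phase with the right probability. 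Making this potential precise and verifying the increment is the delicate part; everything else reduces to the hitting-time estimates of Lemmas~\ref{lem:hittingtime} and~\ref{lem:meet_each_other}.
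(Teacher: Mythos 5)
Your overall plan --- invoke Lemma~\ref{lem:game} to obtain a barrier label $z$, cut the cycle there, and then spend $O(n)$ phases each of length $O(mn^2 d \log n)$ (the all-pairs meeting time of Lemma~\ref{lem:meet_each_other}) --- is exactly the paper's skeleton, and your combinatorial observations about the prefix counts $P_i$ are correct. The genuine gap is the one you flag yourself: the induction on the ``dependency depth of crossings.'' That induction is not well posed, because the set of label-changing collisions that will occur, and which token performs each of them, is not determined in advance (when two equal-labelled tokens meet, which one increments depends on initiator/responder roles and on the interleaving), so there is no fixed partial order of crossings whose levels you can peel off phase by phase. Your proposed repair --- that each pair meets $\Omega(\log n)$ times per phase --- does not close this either: a chain of enabling collisions at a single position can have length $\Theta(n)$, not $\Theta(\log n)$, and nothing guarantees the required meetings occur in the required temporal order within one phase.

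The fix, which is what the paper does, is to induct on \emph{labels} rather than on crossings. In the first phase, the number of tokens labelled $z$ drops to at most one: no token ever enters $z$ (by the choice of $z$), a token that leaves $z$ can never return (labels only increase modulo $n$ and re-entry would cross the barrier), so any two tokens still labelled $z$ at the end of a phase were labelled $z$ throughout it; since they met during the phase, one of them would have incremented --- a contradiction. Once label $z$ holds at most one token, no token ever moves from $z$ to $z+1 \pmod{n}$, so $z+1 \pmod{n}$ becomes a barrier for the next phase, and the same one-phase argument applies to it. After $n$ phases every label holds at most one token, hence (with $n$ tokens and $n$ labels) exactly one, which is $\stoken$; the total is $n \cdot O(mn^2 d\log n) = O(mn^3 d \log n)$ steps in expectation. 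Note that each phase only needs ``every pair meets once,'' so your worry about excesses having to propagate rightward across several positions within a single phase simply does not arise under this per-label induction.
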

\begin{proof}
By Lemma \ref{lem:game},
there exists an integer $z \in \{0,1,\dots,n-1\}$
such that no \emph{token} changes its label
from $z-1 \pmod{n}$ to $z$.
%Then, the number of tokens labeld $z$ is monotonically
%non-incrasing. 
%Moreover, 
Then, the number of tokens labeled $z$ becomes exactly one
before or when all the tokens meet each other.
Since Sudo \etal~\cite{Sud+20B} proved that
$n$ tokens making random walks in arbitrary graphs
meet each other within $O(mn^2 d \log n)$ steps in expectation,
the number of tokens labeled $z$ becomes exactly one
within  $O(mn^2 d \log n)$ steps in expectation.
Thereafter, no token changes its label
from $z$ to $z+1 \pmod{n}$. 
Hence, the number of tokens labeled $z+1 \pmod{n}$
becomes one in the next $O(mn^2 d \log n)$ steps 
%from the same reason.
in the same way.
Repeating this procedure,
all the tokens have distinct labels
within $O(mn^3 d \log n)$ steps in expectation.
\end{proof}

\begin{lemma}
\label{lem:ssync}
Starting from any configuration $C_0 \in \stoken$,
an execution of $\prank(\nu,\mu)$ under the uniformly
random scheduler (\ie $\Xi_{P(\nu,\mu)}(G,C_0,\rs)$)
reaches a configuration
in $\ssync$ within $O(mn^3)$ steps in expectation.
\end{lemma}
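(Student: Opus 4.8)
The plan is to exploit two facts established just above. First, $\stoken$ is closed (Lemma \ref{lem:stoken_closed}), so an execution started in $\stoken$ stays there; hence reaching $\ssync$ is exactly the event that all $n$ propositions $\qtoken(0),\dots,\qtoken(n-1)$ hold simultaneously. Second, each $\qtoken(x)$ is \emph{monotone}: once it holds it holds forever (Lemma \ref{lem:qtoken}). Therefore, writing $\tau_x$ for the first step at which $\qtoken(x)$ becomes true, the first step in $\ssync$ is $\max_{0\le x<n}\tau_x$. Since the $\tau_x$ are nonnegative, $\ex{\max_x \tau_x}\le\sum_x\ex{\tau_x}$, so it suffices to prove $\ex{\tau_x}=O(mn^2)$ for each fixed $x$; summing over the $n$ labels then yields the claimed $O(mn^3)$ bound.

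The core step is to bound $\ex{\tau_x}$ by the cover time of token $T_x$. Fix $x$ and consider the worst case in which $\qtoken(x)$ is false throughout an initial segment of the execution (the \emph{failing regime}); by definition this means $V_G(x)\neq\emptyset$ and, at every step, no agent labeled $x$ is white or shares the color $T_x.\colorT$. I would then read off two facts from the transition function (Algorithm \ref{al:prank}). (a) Whenever $T_x$ visits an agent labeled $x$, that agent is neither white (so Line 5 does nothing) nor color-matched to $T_x$ (so the test on Line 6 succeeds); hence Lines 6--8 fire, the agent increments its label to $x+1\pmod n$ and turns white, permanently leaving $V_G(x)$. (b) The only way an agent can \emph{enter} $V_G(x)$ is the increment of Line 7, which simultaneously sets its color to $\shiro$; but a white agent labeled $x$ would make $\qtoken(x)$ true, contradicting the failing regime. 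So in the regime no agent enters $V_G(x)$, and every agent labeled $x$ keeps that label until $T_x$ visits it and is then removed for good.

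Combining (a) and (b), once $T_x$ has visited \emph{every} agent at least once, no agent can still be labeled $x$, i.e.\ $V_G(x)=\emptyset$ and $\qtoken(x)$ holds vacuously, contradicting the assumption that the failing regime persists. Hence $\tau_x$ is at most the time for $T_x$ to visit all agents, which by Lemma \ref{lem:covertime} is at most $2mn^2$ in expectation, so $\ex{\tau_x}\le 2mn^2$. Substituting into $\ex{\max_x\tau_x}\le\sum_x\ex{\tau_x}\le 2mn^3$ gives the bound, and the closedness of $\stoken$ together with the monotonicity of the $\qtoken(x)$ guarantees that the configuration reached lies in $\ssync$ (and remains there, cf.\ Lemma \ref{lem:ssync_closed}).

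I expect the main obstacle to be the careful verification of (a) and (b) directly from the pseudocode, and in particular the subtle point in (b): ruling out that an agent cycles all the way around the label space and re-enters $V_G(x)$ as a genuine \emph{conflicting} agent. The resolution is that every increment (Line 8) resets the color to $\shiro$, so any re-entry is white and would immediately satisfy $\qtoken(x)$; this is precisely what makes the cover-time argument airtight rather than merely heuristic. A secondary point worth stating cleanly is why bounding $\ex{\max_x\tau_x}$ by $\sum_x\ex{\tau_x}$ is adequate: it avoids any joint analysis of the $n$ tokens at the mild cost of one extra factor of $n$, which is exactly what separates the $O(mn^2)$ per-label estimate from the stated $O(mn^3)$ aggregate.
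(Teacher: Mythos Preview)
Your proof is correct and takes essentially the same approach as the paper: bound each $\ex{\tau_x}$ by the cover time of $T_x$ via Lemma~\ref{lem:covertime}, then aggregate over all $x$ using the monotonicity of $\qtoken(x)$ (Lemma~\ref{lem:qtoken}) and closedness of $\stoken$ (Lemma~\ref{lem:stoken_closed}). The paper's version is terser---it simply observes that $\qtoken(x)$ must hold by the time $T_x$ has met all agents labeled $x$---whereas your explicit verification of points (a) and (b) from the pseudocode spells out the same reasoning more carefully.
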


\begin{proof}
By Lemmas \ref{lem:stoken_closed} and \ref{lem:qtoken},
it suffices to show that for each $x \in \{0,1,\dots,n-1\}$,
$\qtoken(x)$ becomes true
within $O(mn^2)$ steps in expectation in an execution
of $\prank(\nu,\mu)$ starting from $C_0$.
%If no agent is labled $x$, $\qtoken(x)$ immediately holds.
We have $\qtoken(x) = \fl$ if and only if
there exists at least one agent labeled $x$
and all of them have colors different from that of $T_x$ (\ie the token labeled $x$).
Even if $\qtoken(x) = \fl$ in $C_0$,
$\qtoken(x)$ becomes true before or when $T_x$ meets all of them.
%With slight modification of the analysis in \cite{Sud+20B},
By Lemma \ref{lem:covertime},
$T_x$ visits (\ie meets) all agents
within $O(mn^2)$ steps in expectation,
%(See the preprint \cite{Sud+20C}),
%(See Lemma \ref{lem:covertime} in Appendix \ref{sec:randomwalk}),
from which the lemma follows.
\end{proof}

\begin{lemma}
\label{lem:srank}
%Assume that $\tmax = \Omega(mn)$.
Assume that $\tmax$ is sufficiently
large $\Omega(mn)$ value.
Starting from any configuration $C_0 \in \ssync$,
an execution of $\prank(\nu,\mu)$ under the uniformly
random scheduler (\ie $\Xi_{P(\nu,\mu)}(G,C_0,\rs)$)
%an execution of $\prank(\nu,\mu)$ reaches a configuration
reaches a configuration
in $\srank$ within $O(mn^3 + n^2 \tmax)$ steps in expectation.
\end{lemma}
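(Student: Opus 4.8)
The plan is to reduce the evolution of the agent labels ($\idA$) to the abstract token game of Lemma~\ref{lem:game}, extract from it a cyclic \emph{wall}, and then clean the $n$ labels one by one in the order dictated by that wall, charging $O(n\tmax+mnd)$ expected steps to each. Two structural facts come first. Since $\ssync$ is closed (Lemma~\ref{lem:ssync_closed}), $\qtoken(x)$ holds throughout the execution; and a label-$x$ agent can alter its $\colorA$ or $\idA$ only in an interaction in which it currently carries $T_x$ (otherwise $\idA\neq\idT$ after the swap, so Lines~5--12 are skipped), so the behaviour of $V_G(x)$ is governed by the single token $T_x$ alone. Using $\qtoken$ I would then argue that every increment is a legal move of the game: when an agent $a$ with $a.\idA=x$ increments (Line~7) it is non-white with $a.\colorA\neq T_x.\colorT$, so $\qtoken(x)$ forces some \emph{other} label-$x$ agent to be white or to match $T_x$, i.e.\ $|V_G(x)|\ge 2$. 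Hence the trajectory of the label multiset is an execution of the game, and Lemma~\ref{lem:game} supplies a label $z$ into which no agent ever increments; that is, label $z$ receives no \emph{inflow}, for all time.

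The heart of the argument is the claim that a label $w$ with no inflow is reduced to a single agent matching $T_w$ within $O(n\tmax+mnd)$ expected steps. With no inflow $|V_G(w)|$ never grows, and no new white label-$w$ agent is ever created (white arises only on an increment \emph{into} $w$). The timer makes flips of $T_w$ sparse: between two consecutive flips $T_w$ must move $\tmax$ times, a window during which its color is constant. Because the cover time of the token's walk is $O(mn)$ moves (the spanning-tree argument of Lemma~\ref{lem:covertime}, with each edge hit within $O(m)$ moves) and $\tmax=\Omega(mn)$, with constant probability a single window is a \emph{full-cover} window. In such a flip-free full-cover window the covering token synchronises every white label-$w$ agent (Line~5) and kicks out every mismatched one (Line~7), leaving a \emph{clean} state in which all surviving label-$w$ agents match $T_w$ and none is white. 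A full-cover window begun with a \emph{splitting} flip then keeps exactly one matching agent (the flip partner) and turns every other into a mismatch (there being only two token colors), all of which the same flip-free cover kicks out, bringing $|V_G(w)|$ to one. By Markov's inequality a constant number of windows in expectation first produces a clean state and then performs this reduction, and by Lemma~\ref{lem:drift} each window costs $O(n\tmax+mnd)$ steps.

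I would then stitch the labels together in the cyclic order $z,z+1,\dots,z-1$. A singleton matching its token never increments: Line~6 is not taken, and a timer flip merely flips the agent together with $T_w$, keeping it matching. Hence once $w-1$ is a matching singleton it emits no outflow and $w$ inherits the no-inflow property; the base case $w=z$ holds globally by the wall. Applying the per-label claim successively to $z,z+1,\dots,z-2$ turns each into a matching singleton, after which $V_G(z-1)$ is forced to hold the single remaining agent; by Lemma~\ref{lem:never_zero} every established singleton persists, so the configuration reached lies in $\srank$. Summing the at most $n-1$ phases, each costing $O(n\tmax+mnd)$ expected steps with $d\le n$, yields the claimed $O(mn^3+n^2\tmax)$ bound.

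The step I expect to be hardest is the per-label reduction, precisely because there are only two token colors: a lone splitting flip followed by a leisurely cover does \emph{not} work, since an intervening second flip would toggle $T_w$ back onto the mismatched agents and cancel all the kicks. Confining each cover to a single \emph{flip-free} window — exactly what the hypothesis $\tmax=\Omega(mn)$ buys, via the move-count bound of Lemma~\ref{lem:drift} together with the $O(mn)$-move cover time — is what makes the clean/split/kick sequence go through inside one window. The other delicate point, used already above, is the faithful reduction to Lemma~\ref{lem:game}: it is the invariant $\qtoken$ that certifies every protocol increment as a genuine two-agent collision, so that the game's wall, and with it the no-inflow starting label $z$, is available.
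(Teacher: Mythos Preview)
Your proposal is correct and follows essentially the same approach as the paper: both invoke Lemma~\ref{lem:game} to obtain a no-inflow label $z$, then argue that for each no-inflow label $w$ a constant number of $\tmax$-move flip-free intervals (within which $T_w$ covers the graph with constant probability since $\tmax=\Omega(mn)$) suffice to reach a clean state and then a matching singleton, and finally chain through $z,z+1,\dots$ using that a matching singleton emits no outflow. Your reduction to the game via $\qtoken$ is in fact spelled out more carefully than the paper's (which simply cites Lemmas~\ref{lem:never_zero} and~\ref{lem:game}), and your ``window'' abstraction folds the paper's separate remove-whites cover into the first full-cover window, but the mechanics and the resulting $O(mn^2+n\tmax)$ per-label charge are the same.
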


\begin{proof}
By Lemmas \ref{lem:never_zero} and \ref{lem:game},
there exists an integer $z \in \{0,1,\dots,n-1\}$
such that no \emph{agent} changes its label
from $z-1 \pmod{n}$ to $z$.
Therefore, at least one agent is labeled $z$ in $C_0$.
All of them get non-white color, \ie blue or red,
or get a new label $z+1 \pmod{n}$
before or when $T_z$ meets all agents,
which requires only $O(mn^2)$ steps in expectation
(See Lemma \ref{lem:covertime}).
%(See Lemma \ref{lem:covertime} in Appendix \ref{sec:randomwalk}).
%(See the preprint \cite{Sud+20C}.)
Without loss of generality, 
we assume that token $T_z$ is red at this time.
By Lemma \ref{lem:ssync_closed}, there is at least one
\emph{red} agent labeled $z$.
After that, by Lemma \ref{lem:drift}
the $\timerT$ of $T_z$ becomes zero
within $O(n\tmax)$ steps in expectation.
%(See the preprint \cite{Sud+20C}.)
%(See Lemma \ref{lem:drift} in Appendix \ref{sec:randomwalk}).
In the next $O(mn^2)$ steps in expectation,  
%Thereafter, $T_x$ meets $blue$
$T_z$ meets a red agent labeled $z$,
at which $T_z$ and this agent changes their colors
to \emph{blue}, and $T_z$ resets its $\timerT$ to $\tmax$.
It is well known that
a token making the random walk visits all nodes of
any undirected graph within $O(mn)$ moves in expectation.
Since a token decreases its $\timerT$ only by one
every time it moves,
$T_z$ meets all agents and makes each agent labeled $z$ blue or pushes it to the next label (\ie $z+1 \pmod{n}$) before its $\timerT$ reaches zero again from $\tmax=\Omega(mn)$,
with probability $1-p$ for any small constant $p$,
by Markov's inequality.
By Lemma \ref{lem:covertime},
this requires only $O(mn^2)$ steps in expectation.
%(See Lemma \ref{lem:covertime} in Appendix \ref{sec:randomwalk}.)
%(See the preprint \cite{Sud+20C}.)
Similarly, (i) the $\timerT$ of $T_z$ becomes zero again in the next
$O(n \tmax)$ steps,
(ii) $T_x$ meets a blue agent labeled $z$, say $v$,
in the next $O(mn^2)$ steps, at which 
$T_x$ and $v$ become red,
%$T_z.\timerT = 0$ becomes true within $O(n \tmax)$ steps
and (iii) $T_x$ meets all agents and
pushes all agents labeled $z$ except for $v$ to the next label
in the next $O(mn^2)$ steps in expectation and
with probability $1-p$ for any small constant $p$.
Therefore, the number of agents labeled $z$ becomes one 
within $O(mn^2 + n \tmax)$ steps in expectation.
After that, no agent changes its label from $z$ to $z+1 \pmod{n}$.
Hence, the number of agents labeled $z+1 \pmod{n}$ becomes one 
in the next $O(mn^2 + n\tmax)$ steps in expectation
by the same reason.
Repeating this procedure,
all agents get mutually distinct labels (\ie $\idA$)
within $O(mn^3 + n^2 \tmax)$ steps in expectation.
\end{proof}

%Now we prove Theorem \ref{th:prank}.
\begin{proof}[of Theorem \ref{th:prank}]
By Lemmas \ref{lem:stoken}, \ref{lem:ssync}, and \ref{lem:srank},
$\Xi_{\prank(\nu,\mu)}(G,C_0,\rs)$ reaches a configuration
in $\srank$
within $O(mn^3d \log n + n^2 \tmax)$ steps in expectation.
By Lemma \ref{lem:srank_closed},
every configuration in $\srank$ is a safe configuration
for the ranking problem.
\end{proof}

\begin{theorem}
\label{th:ranking}
%When the agents know nothing about the number of interactable pairs, \ie $\mu=\mdomain$,
%Let $\nu \in \ndomain$ and $\mu  \mdomain$.
Let $\nu$ be any subset of $\ndomain$
and let $\mu = \mdomain$.
%Provided that the agents know nothing about
%the number of interactable pairs, \ie $\mu = \mdomain$,
Given knowledge $\nu$ and $\mu\ (=\mdomain)$,
there exists a self-stabilizing protocol 
that solves $\elect$ and $\ranking$ in arbitrary graphs
if and only if
the agents know the exact number of agents
%\ie $\graph_{\nu,\mu} \subseteq \graph_{n,*}$ holds
%for some $n \in \ndomain$.     
\ie $\graph_{\nu,\mu}=\graph_{n,*}$
for some $n \in \ndomain$.     
\end{theorem}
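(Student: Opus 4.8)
The plan is to establish the equivalence by proving the two implications separately, using the protocol of Theorem~\ref{th:prank} for sufficiency and the impossibility of Lemma~\ref{lem:rank_impossible} for necessity; the reduction $\elect \preceq \ranking$ noted in Section~\ref{sec:contribution} will transport each result between the two problems.

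For the ``if'' direction, assume $\graph_{\nu,\mu} = \graph_{n,*}$ for some $n \in \ndomain$. Since $\mu = \mdomain$ gives $\graph_{\nu,\mu} = \bigcup_{n' \in \nu}\graph_{n',*}$, this condition holds exactly when $\nu = \{n\}$, i.e. $|\nu| = 1$, which is precisely the standing assumption of $\prank$. Theorem~\ref{th:prank} then guarantees that $\prank(\nu,\mu)$ is a self-stabilizing protocol solving $\ranking$ on every population in $\graph_{n,*}$. Because $\elect \preceq \ranking$, I obtain a protocol for $\elect$ by keeping the transition function of $\prank$ and composing its output function with the map sending rank $0$ to $L$ and every other rank to $F$; once the ranks stabilize (a configuration in $\srank$), the rank-$0$ agent is unique and fixed, so this output respects the $\elect$ specification and never changes after convergence.

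For the ``only if'' direction, I prove the contrapositive: if $\nu$ contains two distinct elements $n_1,n_2$, then no self-stabilizing protocol solves $\elect$ --- nor $\ranking$ --- in arbitrary graphs given $\nu$ and $\mu$. Every integer at least $2$ admits a connected graph (a path, say), so $\graph_{n_1,*}$ and $\graph_{n_2,*}$ are both nonempty, and $\mu = \mdomain$ yields $\graph_{n_1,*}\cup\graph_{n_2,*} \subseteq \graph_{\nu,\mu}$. Lemma~\ref{lem:rank_impossible} therefore forbids any self-stabilizing $\elect$ protocol. The impossibility propagates to $\ranking$ through the reduction: a self-stabilizing $\ranking$ protocol would, by the construction above, induce a self-stabilizing $\elect$ protocol, contradicting Lemma~\ref{lem:rank_impossible}. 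Hence neither problem is solvable unless $|\nu| = 1$, i.e. unless $\graph_{\nu,\mu} = \graph_{n,*}$ for some $n$.

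Since the substantive work is already carried out in Theorem~\ref{th:prank} (the design and $O(mn^3 d\log n + n^2\tmax)$-step convergence of $\prank$) and in Lemma~\ref{lem:rank_impossible} (the two-sizes impossibility distilled from \cite{jikoantei,SIW12,Sud+20}), this theorem is essentially an assembly of those results, and I anticipate no genuine technical obstacle. The one point deserving care is the transfer of impossibility from $\elect$ to $\ranking$ via $\elect \preceq \ranking$: one must confirm that a self-stabilizing $\ranking$ solution truly yields a self-stabilizing $\elect$ solution with no post-convergence output change, which holds because the leader is a fixed function of the eventually stable ranks.
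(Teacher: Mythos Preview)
Your proposal is correct and follows essentially the same approach as the paper, which simply states that the theorem follows from Lemma~\ref{lem:rank_impossible}, Theorem~\ref{th:prank}, and the reduction $\elect \preceq \ranking$. You have spelled out the two directions and the role of the reduction in more detail than the paper does, but the ingredients and the logic are identical.
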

\begin{proof}
The theorem immediately follows from Lemma \ref{lem:rank_impossible}, Theorem \ref{th:prank},
and the fact that $\elect \preceq \ranking$.
\end{proof}

\section{Degree Recognition and Neighbor Recognition}
\label{sec:neighbor}
%\label{sec:pneighbor}
Our goal is to prove the negative and positive propositions 
for $\degree$ and $\neighbor$
introduced in Section \ref{sec:intro}.
First, we prove the negative proposition.
\begin{lemma}
\label{lem:degree_impossible} 
%Given knowledge $\nu$ and $\mu$,
Let $\nu$ and $\mu$ be any sets such that
$\nu \subseteq \ndomain$ and $\mu \subseteq \mdomain$.
There exists no self-stabilizing protocol
that solves $\degree$ in all graphs in $\graph_{\nu,\mu}$
 if $\graph_{n,m_1} \cup \graph_{n,m_2} \subseteq \graph_{\nu,\mu}$ holds for some $n \in \ndomain$ and some
distinct $m_1,m_2 \in \mdomain$ such that
$\graph_{n,m_1} \neq \emptyset$ and $\graph_{n,m_2} \neq \emptyset$.
\end{lemma}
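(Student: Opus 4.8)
The plan is to exploit monotonicity of the edge set: a configuration that is safe for $\degree$ on a graph $H$ must stay output-frozen when we instead run $P$ on any connected spanning subgraph of $H$, and yet on that subgraph its frozen outputs report the degrees of $H$ rather than the (strictly smaller) degrees of the subgraph. Concretely, assume for contradiction that some $P$ is self-stabilizing for $\degree$ on all of $\graph_{\nu,\mu}$, and assume without loss of generality that $m_1 < m_2$. First I would fix an arbitrary $G_1 = (V, E_1) \in \graph_{n,m_1}$ (which exists since $\graph_{n,m_1} \neq \emptyset$) and build $G_2 = (V, E_2)$ on the \emph{same} vertex set $V$ by adding $m_2 - m_1$ fresh edges to $G_1$. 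Because $\graph_{n,m_2} \neq \emptyset$ we have $m_2 \le \binom{n}{2}$, so there are at least $\binom{n}{2} - m_1 \ge m_2 - m_1$ non-edges of $G_1$ available to add; the result $G_2$ is simple, has exactly $m_2$ edges, and is connected (being a supergraph of the connected $G_1$). Hence $G_1 \in \graph_{n,m_1}$, $G_2 \in \graph_{n,m_2}$, both lie in $\graph_{\nu,\mu}$, and $E_1 \subseteq E_2$.

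Next I would produce the frozen-but-wrong configuration. Since $P$ is self-stabilizing on $G_2$, a globally-fair execution reaches a configuration $C \in \call(P(\nu,\mu),G_2)$ that is safe for $\degree$; by Definition \ref{def:safe}, in $C$ every agent $v$ outputs $|N_{G_2}(v)|$ and, crucially, no agent changes its output in \emph{any} execution of $P$ on $G_2$ starting from $C$. Now view this same mapping $C : V \to Q$ as an initial configuration of $P$ on $G_1$. The key observation is that every execution of $P$ on $G_1$ from $C$ uses only interactions over $E_1 \subseteq E_2$, so each of its configurations is reachable from $C$ by $P$ on $G_2$; were any agent to change its output at some finite step of the $G_1$-execution, that same step would witness an output change in a $G_2$-execution from $C$, contradicting the safety of $C$ on $G_2$. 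Therefore, on $G_1$ starting from $C$, the outputs are frozen forever at the values $|N_{G_2}(v)|$.

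Finally I would derive the contradiction. Every endpoint $a$ of an added edge satisfies $|N_{G_1}(a)| < |N_{G_2}(a)|$, so for at least one agent the frozen output $|N_{G_2}(a)|$ differs from the correct value $|N_{G_1}(a)|$ required by the specification of $\degree$ on $G_1$. Since no output ever changes, no configuration in any execution of $P$ on $G_1$ from $C$ satisfies the $\degree$ specification, so such an execution never reaches a safe configuration for $\degree$; this contradicts Definition \ref{def:ss} (equivalently, it exhibits a non-safe final configuration, contradicting Observation \ref{obs:ss}), completing the proof. I expect the only delicate step to be the ``freezing transfer'': carefully justifying, from the closure clause (ii) of the safe-configuration definition on $G_2$, that no output can change along \emph{any} $G_1$-execution, which rests on the containment $E_1 \subseteq E_2$ and an induction over the execution prefix; the graph construction and the final degree-mismatch counting are routine by comparison.
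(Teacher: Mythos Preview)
Your proposal is correct and follows essentially the same route as the paper's proof: build $G_1 \subset G_2$ on the same vertex set, take a safe configuration on the larger graph, and use the edge containment $E_1 \subseteq E_2$ to transport $G_1$-interactions into $G_2$. The only cosmetic difference is the direction in which the contradiction is phrased: the paper argues that self-stabilization on $G_1$ forces an output change along some $G_1$-sequence, which is then a valid $G_2$-sequence violating safety on $G_2$; you take the contrapositive, arguing that safety on $G_2$ freezes the outputs along every $G_1$-execution and hence $P$ never converges on $G_1$. Your explicit construction of $G_2$ from $G_1$ (using $m_2 \le \binom{n}{2}$) is a bit more careful than the paper's ``by definition there must exist,'' but the underlying idea is identical.
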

\begin{proof}
Assume $m_1 < m_2$ without loss of generality.
By definition, there must exist two graphs
$G'=(V_{G'},E_{G'}) \in \graph_{n,m_1}$
and $G''=(V_{G''},E_{G''}) \in \graph_{n,m_2}$
such that %$G$ is a proper subgraph of $G''$,
$V_{G'}=V_{G''}$ and $E_{G'} \subset E_{G''}$.
Then, there exists at least one agent $v \in V_{G''}$
such that its degree differs in $G'$ and $G''$.
Let $\delta'$ and $\delta''$ be the degrees of $v$
in $G'$ and $G''$, respectively.
Assume for contradiction that there is a self-stabilizing protocol
$P(\nu,\mu)$ that solves $\degree$ both in $G'$ and $G''$.
By definition, there must be at least one safe configuration
$S$ of protocol $P(\nu,\mu)$ on $G''$ for $\degree$.
In every execution of $P(\nu,\mu)$ starting from $S$ on $G''$,
agent $v$ must always output $\delta''$ as its degree.
The configuration $S$ can also be a configuration on $G'$
because $V_{G'}=V_{G''}$.
Since $P(\nu,\mu)$ is self-stabilizing in $G'$,
there must be a finite sequence of interactions
$\gamma_0,\gamma_1,\dots,\gamma_t$ of $G'$
%(each $\gamma_i \in E_{G'}$),
that put configuration $S$ to a configuration where
$v$ outputs $\delta'$ as its degree.
%Each $\gamma_i \in E_{G'}$ is also included in $E_{G''}$
Since $E_{G'} \subset E_{G''}$,
$\gamma_0,\gamma_1,\dots,\gamma_t$ is
also a sequence of interactions in $G''$.
This implies that this sequence changes the output of $v$
from $\delta''$ to $\delta'$
starting from a \emph{safe} configuration,
a contradiction. 
\end{proof}
%\subsection{Impossibility}
%
%\subsection{Protocol $\bfpneighbor$}

\begin{algorithm}[t]
%\caption{Ranking protocol with knowledge $\nu=\{n\}$}
\caption{$\pneighbor(\nu,\mu)$}
\label{al:pneighbor}

\textbf{Assumption}:
$|\nu|=1$ and $|\mu|=1$.
(Let $\nu = \{n\}$ and $\mu = \{m\}$.)
%\begin{algorithmic}
%\STATE $|\nu|=1$ and $|\mu|=1$.
%(Let $\nu = \{n\}$ and $\mu = \{m\}$.)
%%(Let $n$ and $m$ be the integers such that $\nu=\{n\}$ and $\mu=\{m\}$.)
%\end{algorithmic}
\vblank

%\textbf{Input}:
%\begin{algorithmic}
%\STATE $n$: the number of agents in the population
%\STATE $m$: the number of interactable pairs in the population
%\end{algorithmic}

\textbf{Variables}:
\begin{algorithmic}
\STATE $\idA, \idT \in \{0,1,\dots,n-1\}$
\COMMENT{Updated only by $\prank$}
\STATE $\degreeT \in \{0,1,\dots,n\}$, $\dsum \in \{0,1,\dots,2m+1\}$
\STATE $\resetE \in \{0,1,\dots,\emax\}$, $\timerP \in \{0,1,\dots,\pmax\}$
\STATE $\neighbors, \counted \in 2^{\{0,1,\dots,n-1\}}$
%\STATE $\degreeT \in \{0,1,\dots,n\}$, $\dsum \in \{0,1,\dots,2m+1\}$, $\resetE \in \{0,1,\dots,\emax\}$,
%\STATE $\timerP \in \{0,1,\dots,\pmax\}$, $\neighbors, \counted \in 2^{\{0,1,\dots,n-1\}}$
\end{algorithmic}
\vblank

\textbf{Output function} $\outputs$: $(\idA,\neighbors)$
\vblank
%\begin{algorithmic}
%\STATE Outputs $(\idA,\neighbors)$
%\end{algorithmic}

\textbf{Interaction} between initiator $a_0$ and responder $a_1$:
\hspace{-1cm} \verb| |
\begin{algorithmic}[1]  
 \setcounter{ALC@line}{0}   
 \STATE Execute $\prank$ with substituting sufficiently large $\Theta(mn)$ value for $\tmax$.
 \STATE $a_0.\degreeT \swap a_1.\degreeT$\\
 \COMMENT{Execute the random walk of two tokens with $\prank$}
 \vblank

 \STATE $a_0.\resetE \gets a_1.\resetE \gets \max(0,a_0.\resetE-1,a_1.\resetE-1)$
 \STATE
 \ifthen{$a_0.\resetE > 0$}{$a_0.\neighbors \gets a_1.\neighbors \gets \emptyset$}
 \vblank

 \FORALL{$i\in\{0,1\}$}
 \STATE $a_i.\timerP \gets \max(0,a_i.\timerP-1)$
 \IF{$a_i.\timerP = 0$}
 \STATE $(a_i.\dsum,a_i.\counted,a_i.\timerP) \gets (0,\emptyset,\pmax)$
 %\STATE $a_i.\timerP \gets \pmax$
 \ENDIF
 \STATE $a_i.\neighbors \gets a_i.\neighbors \cup \{a_{1-i}.\idA\}$
 \STATE
 \ifthen{
 $a_i.\idA=a_i.\idT$
 }{
 $a_i.\degreeT \gets |a_i.\neighbors|$
 }
% \ENDFOR
 \vblank 

% \FORALL{$i\in\{0,1\}$}
 \IF{$a_i.\idT \notin a_i.\counted$}
 \STATE $a_i.\dsum \gets \min(2m+1,a_i.\dsum + a_i.\degreeT)$
 \STATE $a_i.\counted \gets a_i.\counted \cup \{a_i.\idT\}$
 \ENDIF
 \STATE \ifthen{$a_i.\dsum=2m+1$}{$a_i.\resetE \gets \emax$}
 \ENDFOR
\end{algorithmic}
%\begin{algorithmic}[1]  
% \setcounter{ALC@line}{0}   
% \STATE Execute $\prank$
% \STATE $a_0.\degreeT \swap a_1.\degreeT$
% \vblank
%
% \STATE $a_0.\resetE \gets a_1.\resetE \gets \max(0,a_0.\resetE-1,a_1.\resetE-1)$
% \STATE $a_0.\resetB \gets a_1.\resetB \gets \max(0,a_0.\resetB-1,a_1.\resetE-1)$
% \STATE
% \ifthen{$a_0.\resetE > 0$}{$a_0.\neighbors \gets a_1.\neighbors \gets \emptyset$}
% \IF{$a_0.\resetB > 0$}
% \STATE $a_0.\counted \gets a_1.\counted \gets \emptyset$
% \STATE $a_0.\dsum \gets a_1.\dsum \gets 0$ 
% \STATE $a_0.\timerP \gets a_1.\timerP \gets \tmax$
% \ENDIF 
% \vblank
%
% \FORALL{$i\in\{0,1\}$}
% \STATE $a_i.\neighbors \gets a_i.\neighbors \cup \{a_{1-i}\}$
% \STATE
% \ifthen{
% $a_i.\idA=a_i.\idT$
% }{
% $a_i.\degreeT \gets |a_i.\neighbors|$
% }
% \ENDFOR
% \vblank 
%
% \FORALL{$i\in\{0,1\}$}
% \IF{$a_i.\idT \notin a_i.\counted$}
% \STATE $a_i.\dsum \gets \min(m+1,a_i.\dsum + a_i.\degreeT)$
% \STATE $a_i.\counted \gets a_i.\counted \cup \{a_i.\idT\}$
% \ENDIF
% \STATE $a_i.\timerP \gets \max(0,a_i.\timerP-1)$
% \IF{$a_i.\timerP = 0$}
% \STATE $a_i.\resetB \gets \rmax$
% \STATE $a_i.\timerP \gets \tmax$
% \ENDIF
% \STATE \ifthen{$a_i.\dsum=m+1$}{$a_i.\resetE \gets \rmax$}
% \ENDFOR
%\end{algorithmic}
\end{algorithm}

To prove the positive proposition,
we give a self-stabilizing protocol
$\pneighbor$, which solves the neighbor recognition problem
($\neighbor$) in arbitrary graphs
given the knowledge of the exact number of agents
and the exact number of interactable pairs,
that is, given knowledge $\nu$ and $\mu$ such that
$|\nu|=|\mu|=1$.
In the rest of this section,
let $n$ and $m$ be the integers such that 
$\nu = \{n\}$ and $\mu = \{m\}$.

The pseudocode of $\pneighbor$
is shown in Algorithm \ref{al:pneighbor}.
%Each agent $v$ has four variables
%$\idA, \idT \in \{0,1,\dots,n-1\}$, 
Our goal is to let the agents recognize
the set of their neighbors. % the distinct labels $0,1,\dots,n-1$.
Each agent $v$ stores its label in a variable $v.\idA \in \{0,1,\dots,n-1\}$ and the set of the labels assigned to its neighbors
in a variable $\neighbors \in 2^{\{0,1,\dots,n-1\}}$.
Each agent $v$ outputs $(v.\idA,v.\neighbors)$. 

We use $\prank$ as a sub-algorithm
to assign the agents the distinct labels $0,1,\dots,n-1$
and to let the $n$ tokens make the random walk.
Specifically, we first execute $\prank$ whenever two agents have an interaction (Line 1), substituting a sufficiently large $\Theta(mn)$ value for $\tmax$.
We do not update the variables used in $\prank$
in the other lines (Lines 2--17).
Therefore, by Theorem \ref{th:prank},
an execution of $\pneighbor$ starting from any configuration
reaches a configuration
in $\srank$ within $O(mn^2 d \log n)$ steps in expectation.
Hence, %to simplify explanation,
we need to consider
only an execution after reaching a configuration
in $\srank$. Then, we can assume that 
the population always has exactly one agent labeled $x$
and exactly one token labeled $x$ for each $x=\{0,1,\dots,n-1\}$.
We denote them by $A_x$ and $T_x$, respectively.

The agents compute their $\neighbors$ in a simple way:
every time two agents $u$ and $v$ have an interaction,
$u$ adds $v.\idA$ to $u.\neighbors$
and $v$ adds $u.\idA$ to $v.\neighbors$ (Line 10).
However, this simple way to compute $\neighbors$
is not enough to design a self-stabilizing protocol
because we consider an arbitrary initial configuration.
%Hence, $v.\neighbor = \{0,1,\dots,n-1\}$ may hold
%for all $v \in V_G$, thus we need a mechanism to correct
%this corrupted data.
Specifically, in an initial configuration,
$v.\neighbors$ may include
$u.\idA$ for some $u \notin N_G(v)$.
We call such $u.\idA$ a \emph{fake label}.
To compute $v.\neighbors$ correctly, 
in addition to the above simple mechanism,
it suffices to detect the existence of a fake label
and reset the $\neighbors$ of all agents
to the empty set
if a fake label is detected.

Using the knowledge $\mu = \{m\}$,
we achieve the detection of fake labels
with the following strategy.
%and the reset of $\enighbors$
%
%achieve this with the following strategy
%\begin{itemize}
% \item
Each token $T_x$ carries $|A_x.\neighbors|$
in a variable $\degreeT \in \{0,1,\dots,n\}$ (Line 2).
Whenever $T_x$ meet $A_x$, the value of $T_x.\degreeT$
is updated by the current value of $|A_x.\neighbors|$
(Line 11).
Each agent always tries to estimate
$\sum_{v \in V_G} |v.\neighbors|$ using variables
$\dsum \in \{0,1,\dots,2m+1\}$,
$\counted \in 2^{\{0,1,\dots,n-1\}}$,
and $\timerP \in \{0,1,\dots,\pmax\}$,
where $\pmax$ is a sufficiently large $\Theta(mnd \log n)$ value.
It uses $\timerP$ as a count-down timer
to reset $\dsum$ and $\counted$ periodically.
Specifically, an agent $v$
decreases $v.\timerP$ by one every time 
it has an interaction and 
resets $v.\dsum$, $v.\counted$, and $v.\timerP$
to $0$, $\emptyset$, and $\pmax$, respectively,
when $v.\timerP$ reaches zero
(Lines 6-9).
Whenever agent $v$ meets $T_x$ such that $x \notin v.\counted$,
%$v$ increases $v.\dsum$ by $T_x.\degreeT$
$v$ executes $v.\dsum \gets \min(2m+1,v.\dsum + T_x.\degreeT)$
and adds $x$ to $v.\counted$.
(Lines 12-15)
%It finishes computing
We expect $v.\dsum = \sum_{v \in V_G} \allowbreak |v.\neighbors|$
when $v$ meets all of $T_0, T_1, \dots, T_{n-1}$.
If $v.\dsum$ reaches $2m+1$,
agent $v$ concludes that 
%agent $v$ suspects that
at least one agent has a fake label,
\ie $u.\neighbors \not \subseteq \{w.\idA \mid w \in N_G(u)\}$
for some $u$.
%the $\neighbors$ of at least one agent is not correct.
%at least one agent has a fake label,
%\ie $v.\neighbors \not \subseteq \{u.\idA \mid u \in N_G(v)\}$
%for some $v$.

When the existence of a fake label is detected,
we reset the $\neighbors$s of all agents 
using a variable $\resetE \in \{0,1,\dots,\emax\}$,
where $\emax$ is a sufficiently large $\Theta(n^2)$ value.
Specifically, when $v.\dsum=2m+1$ holds,
$v$ emits the error signal by setting variable
$v.\resetE$ to $\emax$
(Line 16).
Thereafter, the error signal is propagated
to the whole population via \emph{the larger value propagation}:
when two agents $u$ and $v$ meet,
they substitute
$\max(0,u.\resetE-1,v.\resetE-1)$ for their $\resetE$s.
(Line3).
%they compare their $\resetE$s, pick up the larger value,
%decreases it by one, and substitute it for $u.\resetE$
%and $v.\resetE$ (Line 3).
Whenever an agent $v$ receives the error signal,
\ie $v.\resetE > 0$ holds,
it resets its $\neighbors$ to the empty set (Line 4).
%\end{itemize}

Thus, even if some agent has fake labels at the beginning of an execution,
the population eventually reaches a configuration where no agent has fake labels
after the occurrence of the following events:
the existence of a fake label is detected,
the error signal propagates to the whole population,
and all agents reset their $\neighbors$s to the empty set.
Thereafter, for any $x \in \{0,1,\dots,n-1\}$,
$T_x$ eventually meets $A_x$,
after which $T_x.\degreeT \le |N_G(A_x)|$ always hold.
%After that, no agent detects 
Hence, by the periodical reset of $\dsum$ and $\counted$,
the population eventually reach a configuration
from which no agent emits the error signal. 
%such that there is no fake label and 
Thereafter, the population will soon reach a configuration
that satisfies $v.\neighbors = \{u.\idA \mid u \in N_G(v)\}$
for all $v \in V_G$
by the above simple computation of $\neighbors$ (Line 10).
Once it reaches such a configuration, 
no agent changes its $\neighbors$.

\begin{theorem}
\label{th:pneighbor}
Given knowledge $\nu$ and $\mu$,
$\pneighbor(\nu,\mu)$ is a self-stabilizing protocol
that solves $\neighbor$ in arbitrary graphs
if $\nu = \{n\}$ and $\mu = \{m\}$
for some integers $n$ and $m$.
Starting from any configuration $C_0$
on any population $G=(V_G,E_G) \in \graph_{n,m}$,
the execution of $\pneighbor(\nu,\mu)$
under the uniformly random scheduler
(\ie $\Xi_{\pneighbor(\nu,\mu)}(G,C_0,\rs)$)
reaches a safe configuration
within $O(mn^3d \log n)$ steps in expectation,
where $m=|E_G|/2$ and $d$ is the diameter of $G$.
Each agent uses $O(n)$ bits of memory space
to execute $\pneighbor(\nu,\mu)$. 
\end{theorem}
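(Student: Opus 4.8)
The plan is to invoke Observation~\ref{obs:ss} and reason only about the execution $\Xi = \Xi_{\pneighbor(\nu,\mu)}(G,C_0,\rs)$ under the uniformly random scheduler, showing it reaches a safe configuration with probability $1$ and bounding the expected time. I would first fix a set of safe configurations for $\neighbor$ and verify the closure clause of Definition~\ref{def:safe}, then account for the convergence time in phases. The space bound is immediate: apart from the $O(\log n)$-bit variables inherited from $\prank$ together with $\degreeT$, $\dsum$, $\resetE$, and $\timerP$ (each ranging over a set of size $\mathrm{poly}(n)$), the only large variables are $\neighbors$ and $\counted$, which are subsets of $\{0,1,\dots,n-1\}$ and hence cost $O(n)$ bits; thus $O(n)$ bits per agent suffice.

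For closure I would call a configuration safe if it lies in $\srank$, every agent satisfies $v.\neighbors = \{u.\idA \mid u \in N_G(v)\}$, every token satisfies $T_x.\degreeT = |N_G(A_x)|$, and $v.\resetE = 0$ for all $v$. Since $\srank$ is closed (Lemma~\ref{lem:srank_closed}), the labels, the tokens, and hence the pairing $A_x$/$T_x$ are frozen. With correct neighbor sets and token degrees we have $\sum_x T_x.\degreeT = 2m$, so every agent's $\dsum$ stays at most $2m$ and in particular never reaches $2m+1$; consequently no agent ever sets $\resetE \gets \emax$ (Line~16), no reset of $\neighbors$ is triggered (Line~4), and the only effect of the accumulation step (Line~10) is to re-insert labels already present. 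The periodic clearing of $\dsum$ and $\counted$ via $\timerP$ touches neither $\idA$ nor $\neighbors$, so the output $(v.\idA, v.\neighbors)$ of every agent is invariant, establishing closure.

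For convergence I would proceed in phases. In Phase~1, Theorem~\ref{th:prank} with $\tmax = \Theta(mn)$ (so that $n^2\tmax = O(mn^3)$ is absorbed) gives that $\Xi$ reaches $\srank$ within $O(mn^3 d \log n)$ steps; from then on labels are fixed, there is exactly one $A_x$ and one $T_x$ per label, and crucially Line~10 can only insert true neighbor labels, so no new fake label is ever created and the set of fake labels is non-increasing. In Phase~2 (detection), if a fake label is present, then once every edge has been used and each $T_x$ has met $A_x$ (Lemma~\ref{lem:hittingtime}) we have $\sum_x T_x.\degreeT \ge \sum_v |v.\neighbors| > 2m$; by Lemma~\ref{lem:meet_all_tokens} some agent is visited by all $n$ tokens within a single $\timerP$-window of length $\pmax = \Theta(mnd\log n)$, so its $\dsum$ reaches $2m+1$ and it emits the error signal, which the larger-value propagation carries to every agent under the $\emax = \Theta(n^2)$ budget, erasing every $\neighbors$ that it reaches. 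In Phase~3 (reconstruction), once no fake label remains we have $A_x.\neighbors \subseteq \{u.\idA \mid u \in N_G(A_x)\}$ forever, hence $|A_x.\neighbors| \le |N_G(A_x)|$; after every edge is used again (growing all sets to the correct value, $O(m\log m)$ steps) and each $T_x$ meets $A_x$ (correcting all token degrees, $O(mnd\log n)$ steps by Lemmas~\ref{lem:hittingtime} and \ref{lem:covertime}), we reach the safe set described above, and the total cost is dominated by the Phase~1 bound $O(mn^3 d\log n)$.

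The hard part is Phase~3: ruling out an infinite cascade of resets and keeping the total cost within budget. Two points need care. First, even after all fake labels are gone, a token may still carry a stale, over-large $\degreeT$ inherited from $C_0$, which can spuriously drive some $\dsum$ to $2m+1$ and trigger a further reset. I would argue this is harmless because such a reset can never reintroduce a fake label (Line~10 only adds true labels), and because each $T_x$ corrects its $\degreeT$ to a value at most $|N_G(A_x)|$ the first time it meets $A_x$ after the neighbor sets became fake-free; since these meetings occur within $O(mnd\log n)$ expected steps for all tokens simultaneously and are not blocked by reset activity, afterward $\sum_x T_x.\degreeT \le 2m$ and no detection can ever fire again, so only finitely many resets occur. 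Second, I must confirm that the propagation genuinely clears every agent's $\neighbors$ before the continuously refreshed signal dies; this is precisely where $\emax = \Theta(n^2)$ is needed, and I would bound the broadcast time of the refreshed signal and verify it stays comfortably inside the Phase~1 bound. With these two facts in place, Observation~\ref{obs:ss} upgrades ``reaches a safe configuration with probability $1$'' to self-stabilization, completing the proof.
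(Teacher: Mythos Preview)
Your overall structure matches the paper's proof closely, but there is a genuine gap centered on the variable $\dsum$. Your closure argument says that once token degrees are correct, ``every agent's $\dsum$ stays at most $2m$.'' That is only true for $\dsum$ values that are freshly accumulated from $0$; it is \emph{not} true of a configuration in your safe set as defined. Your safe set constrains $\idA$, $\neighbors$, $\degreeT$, and $\resetE$, but places no constraint on $\dsum$ or $\counted$. A configuration with (say) $v.\dsum = 2m+1$ or with $v.\dsum = 2m$ and $v.\counted = \{1,\dots,n-1\}$ satisfies all your safe-set conditions, yet at $v$'s next interaction Line~16 fires (immediately or after one more token visit), $v.\resetE \gets \emax$, and then Line~4 wipes $v.\neighbors$, changing $v$'s output. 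So your safe set is not closed. The fix is to strengthen the safe set by additionally requiring that $v.\dsum \le \sum_{x \in v.\counted} |N_G(A_x)|$ for every $v$ (or, more bluntly, that no configuration reachable from it has any $\dsum = 2m+1$); with this extra clause your closure argument goes through.

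The same oversight reappears in your Phase~3. You write that once every $T_x$ has met $A_x$ and $\sum_x T_x.\degreeT \le 2m$, ``no detection can ever fire again.'' That is false: an agent may still carry $v.\dsum = 2m+1$ accumulated \emph{before} the token degrees were corrected, and it will keep setting $v.\resetE \gets \emax$ at every interaction until its own $\timerP$ reaches $0$ and resets $\dsum$ to $0$. The paper's proof handles this explicitly: after the token degrees are corrected it waits an additional $O(m\cdot \pmax) \subseteq O(mn^3 d \log n)$ expected steps for every agent's $\timerP$ to hit zero, and only \emph{then} asserts that no agent ever again sees $\dsum = 2m+1$. You need the same intermediate step; otherwise neither the ``only finitely many resets'' claim nor the final entry into your safe set (which requires $\resetE = 0$ everywhere) is justified, since lingering $\dsum = 2m+1$ values would continually re-emit the error signal.
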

\begin{proof}
Define $\nlabel(v) = \{u.\idA \mid u \in N_G(v)\}$
and define $\snofake$ as the set of all configurations
in $\srank$
where no agent has a fake label in its $\neighbors$,
that is, 
$v.\neighbors \subseteq \nlabel(v)$
 holds for all $v \in V_G$.

First, we show that
execution $\Xi = \Xi_{\pneighbor(\nu,\mu)}(G,C_0,\rs)$
reaches a configuration in $\snofake$
within $O(mn^3 d \log n)$ steps in expectation.
By Theorem \ref{th:prank},
$\Xi$ reaches a configuration $C'$ in $\srank$
within $O(mn^3 d \log n)$ steps in expectation
because $\tmax = \Theta(mn)$.
We assume $C' \notin \snofake$ because
otherwise we need not discuss anything.
Interactions happen between all interactable pairs
within $O(m \log n)$ steps in expectation.
Therefore, after reaching $C'$,
$\Xi$ reaches within $O(m \log n)$ steps in expectation
a configuration $C''$ where $\nlabel(v) \subseteq v.\neighbors$
for all $v \in V_G$
or a configuration where $u.\resetE > 0$ for some $u \in V_G$. In the former case, $\sum_{v \in V}|v.\neighbors| > 2m$ holds
in $C''$
since at least one agent has one or more fake labels
in its $\neighbors$. 
Thereafter, some agent $v$ decreases its $\timerP$
to zero and resets it to $\pmax$ in
the next $O(m \pmax)=O(m^2 n d \log n) \subseteq O(mn^3 d \log n)$ steps in expectation.
After that, 
$v$ meets all tokens within $O(mnd \log n)$ steps
in expectation.
(See \ref{lem:meet_all_tokens}.)
%(See Lemma \ref{lem:meet_all_tokens} in Appendix \ref{sec:randomwalk}).
%(See the preprint \cite{Sud+20C}.)
As a result, $v.\dsum$ reaches $2m+1$
and $v$ emits the error signal.
To conclude, after $\Xi$ reaches $C'$, 
some agent emits the error signal,
\ie it substitutes $\emax$ for its $\resetE$.
%Then, %Since $\emax$ is a sufficiently large $\Theta(n^2)$ value,
Since we set $\emax$ to a sufficiently large $\Theta(n^2)$ value,
the error signal is propagated to the whole population 
within $O(mn)$ steps with probability $1-O(1/n)$.
(See Lemma 5 in \cite{ninikanjiko}.)
Every time an agent receives the error signal,
it resets its $\neighbors$ to the empty set.
Therefore, $\Xi$ reaches a configuration in $\snofake$
within $O(mn^3d \log n)$ steps in expectation.

After entering $\snofake$, $\Xi$ reaches
within $O(mnd \log n)$ steps in expectation a configuration
where $\sum_{x =0,1,\dots,n-1}T_x.\degreeT \le 2m$ holds;
because every $T_x$ meets $A_x$ within $O(mnd)$ steps in expectation for every $x \in \{0,1,\dots,n-1\}$. Similarly, all agents reset their $\dsum$ and $\counted$
in the next $O(m \pmax)\subseteq O(mn^3d \log n)$ step in expectation. Thereafter, no agent sees $\dsum = 2m+1$, hence no agent emits the error signal, after which the error signal disappears from the population in the next $O(\emax \cdot m \log m) = O(mn^2 \log n)$
steps in expectation.
Therefore, interactions happen between all interactable pairs
in the next $O(m \log n)$ steps in expectation,
by which $v.\neighbors = \nlabel(v)$ holds for all $v \in V_G$.
After that, no agent $v$ changes $v.\neighbors$,
which yields that $\Xi$ has reached a safe configuration.

Each agent uses only $O(n)$ bits:
both variables $\neighbors$ and $\counted$
require $n$ bits 
and all other variables used in $\pneighbor$
require $O(\log n)$ bits.
%Since $\pmax$ is sufficiently large $$
\end{proof}

%\begin{algorithm}[t]
%%\caption{Ranking protocol with knowledge $\nu=\{n\}$}
%\caption{$\pnsecond(\nu,\mu)$}
%\label{al:pnsecond}
%\textbf{Assumption}:
%\begin{algorithmic}
%\STATE $|\nu|=1$ and $|\mu|=1$. (Let $n$ and $m$ be the integers
%such that $\nu=\{n\}$ and $\mu=\{m\}$.)
%\end{algorithmic}
%
%\textbf{Output function} $\outputs$:
%\begin{algorithmic}
%\STATE Outputs $v.\idA$
%\end{algorithmic}
%\textbf{Interaction} between initiator $a_0$ and responder $a_1$:
%\hspace{-1cm} \verb| |
%\begin{algorithmic}[1]  
% \setcounter{ALC@line}{0}  
% \STATE $a_0.\idA \gets a_1.\idA$
%\end{algorithmic}
%\end{algorithm}

%\section{Knowledge of the Exact Numbers of Both Agents and Interactable Pairs}

%\section{Impossibility Results}
%\label{sec:impossible}

%\section{Discussion}
%\label{sec:discussion}

\section{Conclusion}
\label{sec:conclusion}
In this paper,
we clarified the solvability
of the leader election problem, the ranking problem,
the degree recognition problem, and
the neighbor recognition problem
by self-stabilizing population protocols
with knowledge of the number of nodes
and/or the number of edges in a network.
%The two positive propositions that we proved
The protocols we gave in this paper
require \emph{exact} knowledge on
the number of agents and/or
the number of interactable pairs.
It is interesting and still open
whether \emph{ambiguous} knowledge
such as ``the number of interactable pairs is at most $M$''
and ``the number of agents is not a prime number''
is useful to design self-stabilizing population protocols.

%
% ---- Bibliography ----
%
% BibTeX users should specify bibliography style 'splncs04'.
% References will then be sorted and formatted in the correct style.
%
\bibliographystyle{abbrv}
\bibliography{paper6}
% \bibliography{mybibliography}
%
\end{document}